\title{Instruction Sequence Size Complexity of Parity}
\author{J.A. Bergstra \and C.A. Middelburg}
\institute{Informatics Institute, Faculty of Science, University of
           Amsterdam, \\
           Science Park~904, 1098~XH Amsterdam, the Netherlands \\
           \email{J.A.Bergstra@uva.nl,C.A.Middelburg@uva.nl}}
\begin{document}
\maketitle

\begin{abstract}
Each Boolean function can be computed by a single-pass instruction 
sequence that contains only instructions to set and get the content of 
Boolean registers, forward jump instructions, and a termination 
instruction.
Auxiliary Boolean registers are not necessary for this.
In the current paper, we show that, in the case of the parity functions, 
shorter instruction sequences are possible with the use of an auxiliary 
Boolean register in the presence of instructions to complement the 
content of auxiliary Boolean registers.
This result supports, in a setting where programs are instruction 
sequences acting on Boolean registers, a basic intuition behind the
storage of auxiliary data, namely the intuition that this makes possible 
a reduction of the size of a program.
\begin{keywords} 
Boolean function family, instruction sequence size, 
non-uni\-form complexity measure, parity function. 
\end{keywords}%
\begin{classcode}
F.1.1, F.1.3.
\end{classcode}
\end{abstract}

\section{Introduction}
\label{sect-intro}

In~\cite{BM13a}, we presented an approach to computational complexity in 
which algorithmic problems are viewed as families of functions that 
consist of an $n$-ary Boolean function for each natural number $n$ and 
the complexity of such problems is assessed in terms of the length of 
finite single-pass instruction sequences acting on Boolean registers 
that compute the members of these families.
The instruction sequences concerned contain only instructions to set and 
get the content of Boolean registers, forward jump instructions, and a 
termination instruction.
Moreover, each Boolean register used serves as either input register, 
output register or auxiliary register.

Auxiliary Boolean registers are not needed to compute Boolean functions.
The question whether shorter instruction sequences are possible with the 
use of auxiliary Boolean registers was not answered in~\cite{BM13a}.
In the current paper, we show that, in the case of the parity functions, 
shorter instruction sequences are possible with the use of an auxiliary 
Boolean register provided the instruction set is extended with
instructions to complement the content of auxiliary Boolean registers.
The parity function of arity $n$ is the function from $\Bool^n$ to 
$\Bool$ whose value at $b_1,\ldots,b_n$ is $\True$ if and only if the 
number of $\True$'s in $b_1,\ldots,b_n$ is odd.

The parity functions have a well-known practical application.
If we append to a bit string the value of the parity function for this 
bit~string, the appended bit is called the parity bit.
After appending the parity bit, the number of times that $\True$ occurs 
is always even.
A test for this property is called a parity check.
Appending parity bits and performing parity checks are used in many
techniques to detect errors in transmission of binary data 
(for an overview, see e.g.~\cite[Part~I]{Sal05a}).

In theoretical computer science, the complexity of the parity functions
has been extensively studied 
(see e.g.~\cite{FSS84a,Has86a,Weg91a,BIS12a,IMP12a,Has14a}).
All these studies have been carried out in the setting of Boolean 
circuits.
Moreover, to our knowledge, none of these studies is concerned with an 
issue comparable to the one considered in this paper, namely the issue 
whether, in the case of computing the parity functions, shorter 
instruction sequences are possible with the use of auxiliary Boolean 
registers than without the use of auxiliary Boolean registers.

Our results concerning this issue can be paraphrased as follows: in the 
presence of instructions to complement the content of auxiliary Boolean 
registers, 
(i)~the parity function of arity $n$ can be computed by an instruction 
sequence without the use of an auxiliary Boolean register where the 
length of the instruction sequence is not more than $5n-2$ but more than 
$2n+3$ and 
(ii)~the parity function of arity $n$ can be computed by an instruction 
sequence with the use of one auxiliary Boolean register where the length 
of the instruction sequence is not more than $2n+3$.
Consequently, smaller instruction sequences are possible with the use of
an auxiliary Boolean register.

In our results, the presence of instructions to complement the content 
of auxiliary Boolean registers is assumed.
In~\cite{BM13a}, instruction sequences that contain these instructions 
were not considered.
Since the results from that paper, with the exception of one auxiliary 
result, are concerned with upper bounds of instruction sequence size 
complexity, they go through if instruction sequences may contain 
instructions to complement the content of auxiliary Boolean registers as 
well.
However, when doing some other kinds of complexity analysis, e.g.\ when 
proving lower bounds of instruction sequence size complexity, the 
instructions assumed to be present can matter.
In~\cite{BM15a}, we consider all instructions for Boolean registers that 
are possible in the setting in which the work presented in this paper is 
carried out and study the effect of their presence on instruction 
sequence size.

We give in this paper, in a setting where programs are instruction
sequences acting on Boolean registers, an example where the storage of
auxiliary data makes possible a reduction of program size.
We cannot find any other work supporting the intuitively evident fact 
that it must happen quite often that the storage of auxiliary data leads 
to a reduction of program size.

Since Boolean registers can be looked upon as program variables, the
work on the transformation of programs to structured programs by coding
previous flow of control into auxiliary Boolean variables that is
presented in~\cite{BJ66a,Coo67a} can be considered related work.
However, we consider that work only moderately related because it
concerns the increase of the degree of structuredness of programs
instead of the decrease of the size of programs.
All the work on program correctness in which auxiliary program
variables play a role, starting with~\cite{OG76a} (see~\cite{ABO09a} for
an overview), can be considered somewhat related as well.
However, we consider that work only loosely related because it
restricts the use of auxiliary program variables to the left-hand sides
of assignments whereas, in this paper, the use of auxiliary Boolean
registers is not restricted at all.

The work presented in this paper is carried out in the setting of \PGA\ 
(ProGram Algebra).
\PGA\ is an algebraic theory of single-pass instruction sequences that
was taken as the basis of an approach to the semantics of programming 
languages introduced in~\cite{BL02a}.
As a continuation of the work presented in~\cite{BL02a},
(i)~the notion of an instruction sequence was subjected to systematic
and precise analysis and
(ii)~issues relating to diverse subjects in computer science and 
computer engineering were rigorously investigated in the setting of 
\PGA.
The subjects concerned include programming language expressiveness, 
computability, computational complexity, algorithm efficiency, 
algorithmic equivalence of programs, program verification, program 
compactness, probabilistic programming, and micro-architecture.
For a comprehensive survey of a large part of this work, 
see~\cite{BM12b}.
An overview of all the work done to date in the setting of \PGA\ and 
some open questions originating from that work can be found at
http://instructionsequence.wordpress.com.

This paper is organized as follows.
First, we present the preliminaries on instruction sequences and 
complexity classes based on them that are needed in the rest of the 
paper (Section~\ref{sect-PGA}).
Next, we describe how the parity functions can be computed by
instruction sequences without the use of auxiliary Boolean registers and
with the use of auxiliary Boolean registers 
(Section~\ref{sect-computing-pf}).
Then, we show that the smaller lengths of the instruction sequences in
the latter case cannot be obtained without the use of auxiliary Boolean 
registers (Section~\ref{sect-shorter-with-auxbr}).
Finally, we make some concluding remarks (Section~\ref{sect-concl}).

The preliminaries to the work presented in this paper include excerpts 
from the preliminaries to the work presented in~\cite{BM13a}.
The preliminaries include a brief summary of \PGA.
A comprehensive introduction to \PGA, including examples, can among 
other things be found in~\cite{BM12b}.

\section{Preliminaries: Instruction Sequences and Complexity}
\label{sect-PGA}

In this section, we present a brief outline of \PGA\ (ProGram Algebra), 
the particular fragment and instantiation of it that is used in this 
paper, and the kind of complexity classes considered in this paper.
A mathematically precise treatment for the case without instructions to
complement the content of Boolean registers can be found 
in~\cite{BM13a}.

The starting-point of \PGA\ is the simple and appealing perception
of a sequential program as a single-pass instruction sequence, i.e.\ a
finite or infinite sequence of instructions of which each instruction is
executed at most once and can be dropped after it has been executed or
jumped over.

It is assumed that a fixed but arbitrary set $\BInstr$ of
\emph{basic instructions} has been given.
The intuition is that the execution of a basic instruction may modify a 
state and produces a reply at its completion.
The possible replies are $\False$ and $\True$.
The actual reply is generally state-dependent.
Therefore, successive executions of the same basic instruction may
produce different replies.
The set $\BInstr$ is the basis for the set of instructions that may 
occur in the instruction sequences considered in \PGA.
The elements of the latter set are called \emph{primitive instructions}.
There are five kinds of primitive instructions, which are listed below:
\begin{itemize}
\item
for each $a \in \BInstr$, a \emph{plain basic instruction} $a$;
\item
for each $a \in \BInstr$, a \emph{positive test instruction} $\ptst{a}$;
\item
for each $a \in \BInstr$, a \emph{negative test instruction} $\ntst{a}$;
\item
for each $l \in \Nat$, a \emph{forward jump instruction} $\fjmp{l}$;
\item
a \emph{termination instruction} $\halt$.
\end{itemize}
We write $\PInstr$ for the set of all primitive instructions.

On execution of an instruction sequence, these primitive instructions
have the following effects:
\begin{itemize}
\item
the effect of a positive test instruction $\ptst{a}$ is that basic
instruction $a$ is executed and execution proceeds with the next
primitive instruction if $\True$ is produced and otherwise the next
primitive instruction is skipped and execution proceeds with the
primitive instruction following the skipped one --- if there is no
primitive instruction to proceed with,
inaction occurs;
\item
the effect of a negative test instruction $\ntst{a}$ is the same as
the effect of $\ptst{a}$, but with the role of the value produced
reversed;
\item
the effect of a plain basic instruction $a$ is the same as the effect
of $\ptst{a}$, but execution always proceeds as if $\True$ is produced;
\item
the effect of a forward jump instruction $\fjmp{l}$ is that execution
proceeds with the $l$th next primitive instruction of the instruction
sequence concerned --- if $l$ equals $0$ or there is no primitive
instruction to proceed with, inaction occurs;
\item
the effect of the termination instruction $\halt$ is that execution
terminates.
\end{itemize}

To build terms, \PGA\ has a constant for each primitive instruction and 
two operators. 
These operators are: the binary concatenation operator ${} \conc {}$ and 
the unary repetition operator ${}\rep$.
We use the notation $\Conc{i = k}{n} P_i$, 
where $k \leq n$ and $P_k,\ldots,P_n$ are \PGA\ terms, 
for the \PGA\ term $P_k \conc \ldots \conc P_n$.

The instruction sequences that concern us in the remainder of this paper 
are the finite ones, i.e.\ the ones that can be denoted by closed \PGA\ 
terms in which the repetition operator does not occur. 
Moreover, the basic instructions that concern us are instructions to set 
and get the content of Boolean registers.
More precisely, we take the set
\begin{ldispl}
\set{\inbr{i}.\getbr \where i \in \Natpos} \union
\set{\outbr.\setbr{b} \where b \in \Bool}
\\ \;\; {} \union
\set{\auxbr{i}.\getbr \where i \in \Natpos} \union
\set{\auxbr{i}.\setbr{b} \where i \in \Natpos \Land b \in \Bool} \union
\set{\auxbr{i}.\negbr \where i \in \Natpos}
\end{ldispl}%
as the set $\BInstr$ of basic instructions.%
\footnote
{We write $\Natpos$ for the set $\set{n \in \Nat \where n \geq 1}$ of
positive natural numbers.}

Each basic instruction consists of two parts separated by a dot.
The part on the left-hand side of the dot plays the role of the name of 
a Boolean register and the part on the right-hand side of the dot plays 
the role of a command to be carried out on the named Boolean register.
The names are employed as follows:
\begin{itemize}
\item
for each $i \in \Natpos$,
$\inbr{i}$ serves as the name of the Boolean register that is used as 
$i$th input register in instruction sequences;
\item
$\outbr$ serves as the name of the Boolean register that is used as
output register in instruction sequences;
\item
for each $i \in \Natpos$,
$\auxbr{i}$ serves as the name of the Boolean register that is used as 
$i$th auxiliary register in instruction sequences.
\end{itemize}
On execution of a basic instruction, the commands have the following 
effects:
\begin{itemize}
\item
the effect of $\getbr$ is that nothing changes and the reply is the 
content of the named Boolean register;
\item
the effect of $\setbr{\False}$ is that the content of the named Boolean 
register becomes $\False$ and the reply is $\False$;
\item
the effect of $\setbr{\True}$ is that the content of the named Boolean 
register becomes $\True$ and the reply is $\True$;
\item
the effect of $\negbr$ is that the content of the named Boolean 
register is complemented and the reply is the complemented content.
\end{itemize}

We will write $\ISbr$ for the set of all instruction sequences that can 
be denoted by a closed \PGA\ term in which the repetition operator does 
not occur in the case that $\BInstr$ is taken as specified above.
For each $k \in \Nat$, we will write $\ARISbr{k}$ for the set of all 
instruction sequences from $\ISbr$ in which primitive instructions of 
the forms $\auxbr{i}.c$, $\ptst{\auxbr{i}.c}$ and $\ntst{\auxbr{i}.c}$ 
with $i > k$ do not occur.
Moreover, we will write $\psize(X)$, where $X \in \ISbr$, for the length 
of $X$.

$\ISbr$ is the set of all instruction sequences that matter to the kind 
of complexity classes which will be introduced below.
$\ISbrna$ is the set of all instruction sequences from $\ISbr$ in 
which no auxiliary registers are used.

Let $n \in \Nat$, let $\funct{f}{\Bool^n}{\Bool}$, and 
let $X \in \ISbr$.
Then $X$ \emph{computes} $f$ if there exists a $k \in \Nat$ such that, 
for all $b_1,\ldots,b_n \in \Bool$, on execution of $X$ in an 
environment with input registers $\inbr{1},\ldots,\inbr{n}$, output 
register $\outbr$, and auxiliary registers $\auxbr{1},\ldots,\auxbr{k}$,
if 
\begin{itemize} 
\item 
for each $i \in \set{1,\ldots,n}$, 
the content of register $\inbr{i}$ is $b_i$ when execution starts;
\item
the content of register $\outbr$ is $\False$ when execution starts; 
\item 
for each $i \in \set{1,\ldots,k}$, 
the content of register $\auxbr{i}$ is $\False$ when execution~starts;
\end{itemize}
then the content of register $\outbr$ is $f(b_1,\ldots,b_n)$ when 
execution terminates.

A \emph{Boolean function family} is an infinite sequence
$\indfam{f_n}{n \in \Nat}$ of functions, where $f_n$ is an $n$-ary
Boolean function for each $n \in \Nat$.

Let $\IS \subseteq \ISbr$ and 
$\FN \subseteq \set{h \where \funct{h}{\Nat}{\Nat}}$.
Then $\nuc{\IS}{\FN}$ is the class of all Boolean function families
$\indfam{f_n}{n \in \Nat}$ for which
there exists an $h \in \FN$ such that,
for all $n \in \Nat$, there exists an $X \in \IS$ such that
$X$ computes $f_n$ and $\psize(X) \leq h(n)$. 
We will use the notation $\nuc{\IS}{B(f(n))}$, where 
$\IS \subseteq \ISbr$ and $\funct{f}{\Nat}{\Nat}$, for 
$\nuc{\IS}
  {\set{h \where
        \funct{h}{\Nat}{\Nat} \Land
        \Exists{m \in \Nat}
         {(\Forall{n \in \Nat}{(n \geq m \Limpl h(n) \leq f(n))})}}}$.

In~\cite{BM13a}, it is proved that $\nuc{\ISbr}{\poly}$ coincides with
\PTpoly.%
\footnote
{As usual, $\poly$ stands for  
 $\set{h \where 
      \funct{h}{\Nat}{\Nat} \Land h \mathrm{\,is \,polynomial}}$.
}

\section{Computing Parity Functions by Instruction Sequences}
\label{sect-computing-pf}

In this section, we describe how the parity functions can be computed by 
instruction sequences in two distinct cases: (i) the case where they are
computed without the use of auxiliary Boolean registers and (ii) the 
case where they are computed with the use of a single auxiliary Boolean 
register.

The $n$-ary parity function $\funct{\PF{n}}{\Bool^n}{\Bool}$ is defined 
by
\begin{ldispl}
\PF{n}(b_1,\ldots,b_n) = \True \quad \mathrm{iff} \quad
\textrm{the number of } \True\textrm{'s in } b_1,\ldots,b_n
\textrm{ is odd.}
\end{ldispl}%
We write $\PFF$ for the Boolean function family 
$\indfam{\PF{n}}{n \in \Nat}$.

We begin with defining instruction sequences which are intended to 
compute the parity functions without the use of auxiliary Boolean 
registers.
We define instruction sequences $\PIS{0}{0}$ and $\PIS{0}{1}$ as 
follows:
\begin{ldispl}
\PIS{0}{0} \deq \halt\;, \qquad 
\PIS{0}{1} \deq 
\ptst{\inbr{1}.\getbr} \conc \outbr.\setbr{\True} \conc \halt
\end{ldispl}%
and we uniformly define instruction sequences $\PIS{0}{n}$ for 
$n \geq 2$ as follows:
\begin{ldispl}
\PIS{0}{n} \deq 
\ptst{\inbr{1}.\getbr} \conc 
\Conc{i=2}{n}
 \bigl(\fjmp{4} \conc \ptst{\inbr{i}.\getbr} \conc \fjmp{3} \conc 
       \fjmp{3} \conc \ntst{\inbr{i}.\getbr}\bigr) \conc
\outbr.\setbr{\True} \conc \halt\;.
\end{ldispl}%
The instruction sequences defined above compute the parity functions.
It is not immediately clear why this is the case.
The following remark can be of help in understanding it.
The contents of $\inbr{1},\ldots,\inbr{n}$ are read in that order by
means of test instructions from the (unfolded) instruction sequence.
This happens in such a way that a register is read by means of a
positive test instruction if the value of the parity function for the
prefix of the bit string that has been dealt with before is $0$ and a
register is read by means of a negative test instruction if the value
of the parity function for the prefix of the bit string that has been
dealt with before is $1$.
\begin{proposition}
\label{proposition-correctness-PARzero}
For each $n \in \Nat$, $\PIS{0}{n}$ computes $\PF{n}$.
\end{proposition}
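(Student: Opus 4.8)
The plan is to establish correctness by tracing the flow of control through $\PIS{0}{n}$ --- which, since no repetition operator occurs in it, is literally the finite sequence of primitive instructions as written --- while maintaining an invariant that ties the current position of control to the parity of the prefix of the input that has been read so far. First I would dispatch the two degenerate cases directly: $\PIS{0}{0} = \halt$ terminates immediately, leaving the content of $\outbr$ equal to $\False = \PF{0}()$, and in $\PIS{0}{1}$ the instruction $\outbr.\setbr{\True}$ is executed (followed by $\halt$) exactly when the content of $\inbr{1}$ is $\True$, that is, exactly when $\PF{1}(b_1) = \True$. So from here on I would assume $n \geq 2$.

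For $n \geq 2$ I would number the primitive instructions of $\PIS{0}{n}$ from $1$ to $5n-2$, observe that instruction $1$ is $\ptst{\inbr{1}.\getbr}$, that for each $i$ with $2 \leq i \leq n$ the instructions at positions $5i-8,\ldots,5i-4$ are $\fjmp{4}$, $\ptst{\inbr{i}.\getbr}$, $\fjmp{3}$, $\fjmp{3}$, $\ntst{\inbr{i}.\getbr}$ in that order, that instruction $5n-3$ is $\outbr.\setbr{\True}$, and that instruction $5n-2$ is $\halt$. Fixing inputs $b_1,\ldots,b_n \in \Bool$ and writing $\sigma_i$ for $\PF{i}(b_1,\ldots,b_i)$ --- so $\sigma_0 = \False$ and $\sigma_i$ differs from $\sigma_{i-1}$ precisely when $b_i = \True$ --- the invariant I would prove by induction on $i$, for $1 \leq i \leq n$, reads: the flow of control arrives, with $\outbr$ still holding $\False$, at the positive test instruction for $\inbr{i}$ (instruction $5i-7$, or instruction $1$ when $i = 1$) if $\sigma_{i-1} = \False$, and at the negative test instruction for $\inbr{i}$ (instruction $5i-4$) if $\sigma_{i-1} = \True$; and it is precisely this instruction that reads $\inbr{i}$. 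The base case $i = 1$ is immediate, since execution starts at instruction $1$ with the content of $\outbr$ equal to $\False$ and $\sigma_0 = \False$.

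The induction step amounts to a four-way case analysis on the pair $(\sigma_{i-1}, b_i)$. In each case a short computation --- using the skip-on-false / skip-on-true semantics of positive / negative test instructions together with the one or two forward jumps into which the block for $\inbr{i}$ feeds control --- shows that control next reaches the test instruction for $\inbr{i+1}$ that matches $\sigma_i$ when $i < n$, and, when $i = n$, reaches $\outbr.\setbr{\True}$ and then $\halt$ if $\sigma_n = \True$, but reaches $\halt$ directly if $\sigma_n = \False$. (For example, if $\sigma_{i-1} = \False$, $b_i = \True$ and $i \geq 2$: the positive test for $\inbr{i}$ yields reply $\True$, control moves on to the first $\fjmp{3}$ of the block, which jumps to the $\fjmp{4}$ opening the next block, which jumps to the negative test for $\inbr{i+1}$ --- and $\sigma_i = \True$, exactly as the invariant demands.) I would also check that every position visited before $\inbr{n}$ is read lies strictly before $5n-3$, so $\outbr$ keeps the value $\False$ until then, and that thereafter $\outbr.\setbr{\True}$ is executed at most once and precisely when $\sigma_n = \True$. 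Hence at termination $\outbr$ holds $\True$ iff $\sigma_n = \True$; that is, $\outbr$ holds $\PF{n}(b_1,\ldots,b_n)$. Since no auxiliary register is used, this shows $\PIS{0}{n}$ computes $\PF{n}$.

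I expect the only real difficulty to be bookkeeping: keeping the position arithmetic ($5i-8$, $5i-7$, and so on) consistent, getting every forward-jump offset and every test reply exactly right in all of the cases, and not mishandling the two boundaries --- namely $i = 1$, where instruction $1$ stands in for a full five-instruction block, and $i = n$, where the two $\fjmp{3}$'s and the $\fjmp{4}$ of the last block now target the terminal $\outbr.\setbr{\True}$ and $\halt$ rather than another block. There is no conceptual obstacle: the invariant above is just the formalization of the informal remark already made in the paper, that a register is read by a positive test instruction exactly when the parity of the prefix dealt with so far is $0$.
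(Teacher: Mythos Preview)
Your proof is correct, and the invariant you formalize is exactly the one the paper points to informally just before the proposition. The organization differs slightly from the paper's own argument: the paper proceeds by induction on $n$ (deducing that $\PIS{0}{n{+}1}$ computes $\PF{n{+}1}$ from the hypothesis that $\PIS{0}{n}$ computes $\PF{n}$, via a case split on the content of $\inbr{n{+}1}$), whereas you fix $n$ once and for all and induct on the index $i$ of the input register currently being read. Both inductions unwind to the same four-way case analysis on $(\sigma_{i-1},b_i)$ and the same control-flow invariant, so the difference is purely packaging; your version makes the jump-offset bookkeeping and the boundary cases explicit, while the paper's version is terser and tacitly exploits that the first $5n-4$ instructions of $\PIS{0}{n{+}1}$ coincide with those of $\PIS{0}{n}$.
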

\begin{proof}
For $n < 2$, the proof is trivial.
We prove that $\PIS{0}{n}$ computes $\PF{n}$ for all $n \geq 2$ by 
induction on $n$.
The basis step consists of proving that $\PIS{0}{2}$ computes $\PF{2}$.
This follows easily by an exhaustive case distinction over the contents 
of $\inbr{1}$ and $\inbr{2}$.
The inductive step is proved in the following way.
It follows directly from the induction hypothesis that, after the 
$\fjmp{4} \conc \ptst{\inbr{i}.\getbr} \conc \fjmp{3} \conc
 \fjmp{3} \conc \ntst{\inbr{i}.\getbr}$ has been executed $n$ times, 
execution proceeds with the next instruction if the number of $\True$'s 
in the contents of $\inbr{1},\ldots,\inbr{n}$ is odd and otherwise the 
next instruction is skipped and execution proceeds with the instruction 
following the skipped one.
From this, it follows easily by an exhaustive case distinction over the 
content of $\inbr{n{+}1}$ that $\PIS{0}{n{+}1}$ computes $\PF{n{+}1}$.
\qed
\end{proof}
Because the instruction sequences defined above compute the parity 
functions, we have a first result about the complexity of $\PFF$.
\begin{theorem}
\label{theorem-complexity-class-0-pos}
$\PFF \in \nuc{\ISbrna}{B(5 \mul n - 2)}$.
\end{theorem}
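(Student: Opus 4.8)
The plan is to show that the instruction sequences $\PIS{0}{n}$ defined above directly witness the claimed membership. Three things have to be established: that each $\PIS{0}{n}$ belongs to $\ISbrna$, that each $\PIS{0}{n}$ computes $\PF{n}$, and that the length of $\PIS{0}{n}$ is bounded above by $5 \mul n - 2$ for all sufficiently large $n$. Given these, the function $h$ mapping $n$ to $\psize(\PIS{0}{n})$ supplies the witness required by the definition of $\nuc{\ISbrna}{B(5 \mul n - 2)}$.

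First, membership of $\PIS{0}{n}$ in $\ISbrna$ is immediate from the definition of $\PIS{0}{n}$: the only registers named in its primitive instructions are $\inbr{1},\ldots,\inbr{n}$ and $\outbr$, so no auxiliary register occurs. Second, that $\PIS{0}{n}$ computes $\PF{n}$ is exactly Proposition~\ref{proposition-correctness-PARzero}, which I would simply invoke. Third, I would count the primitive instructions of $\PIS{0}{n}$ for $n \geq 2$: one leading positive test instruction, then $n - 1$ blocks $\fjmp{4} \conc \ptst{\inbr{i}.\getbr} \conc \fjmp{3} \conc \fjmp{3} \conc \ntst{\inbr{i}.\getbr}$ of five instructions each, then $\outbr.\setbr{\True}$ and $\halt$, so that $\psize(\PIS{0}{n}) = 1 + 5 \mul (n - 1) + 2 = 5 \mul n - 2$. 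For $n = 0$ and $n = 1$ one reads off directly from the definitions that $\psize(\PIS{0}{0}) = 1$ and $\psize(\PIS{0}{1}) = 3$.

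To finish, take $h$ to be the function $n \mapsto \psize(\PIS{0}{n})$. The length count just made shows $h(n) \leq 5 \mul n - 2$ for every $n \geq 1$; so, using the definition of the $\nuc{\cdot}{B(\cdot)}$ notation with $m = 1$, together with the facts that $\PIS{0}{n} \in \ISbrna$, that $\PIS{0}{n}$ computes $\PF{n}$, and that $\psize(\PIS{0}{n}) \leq h(n)$ for each $n$, we obtain $\PFF \in \nuc{\ISbrna}{B(5 \mul n - 2)}$. I do not expect any real obstacle here: the argument is a length computation on top of the already-established correctness proposition, and the only point to watch is that the anomalous value at $n = 0$ does not spoil the bound, which is harmless since $B(\cdot)$ constrains $h$ only eventually (here from $n = 1$ on).
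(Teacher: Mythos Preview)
Your proposal is correct and follows essentially the same route as the paper: compute $\psize(\PIS{0}{n})$ (the paper states directly that it equals $5 \mul n - 2$ for all $n > 0$, which agrees with your breakdown since $5\cdot 1 - 2 = 3$), invoke Proposition~\ref{proposition-correctness-PARzero} for correctness, and conclude. Your extra remarks about membership in $\ISbrna$ and the handling of $n=0$ via the ``eventually'' clause in $B(\cdot)$ are spelled out more explicitly than in the paper, but the argument is the same.
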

\begin{proof}
By simple calculations, we obtain that $\psize(\PIS{0}{0}) = 1$ and, for 
each $n > 0$, $\psize(\PIS{0}{n}) = 5 \mul n - 2$. 
From this and Proposition~\ref{proposition-correctness-PARzero}, it 
follows immedi\-ately that $\PFF \in \nuc{\ISbrna}{B(5 \mul n - 2)}$.
\qed
\end{proof}

We go on with defining instruction sequences which are intended to 
compute the parity functions with the use of a single auxiliary Boolean 
register.
We define an instruction sequence $\PIS{1}{0}$ as follows:
\pagebreak[2]
\begin{ldispl}
\PIS{1}{0} \deq \halt
\end{ldispl}%
and we uniformly define instruction sequences $\PIS{1}{n}$ for 
$n \geq 1$ as follows:
\begin{ldispl}
\PIS{1}{n} \deq 
\Conc{i=1}{n}
 \bigl(\ptst{\inbr{i}.\getbr} \conc \auxbr{1}.\negbr\bigr) \conc
\ptst{\auxbr{1}.\getbr} \conc \outbr.\setbr{\True} \conc \halt\;.
\end{ldispl}%
The instruction sequences defined above compute the parity functions as
well.
At each point of the execution of $\PIS{1}{n}$, $\auxbr{1}$ contains the
value of the parity function for the prefix of the bit string that has 
been dealt with at that point.
\begin{proposition}
\label{proposition-correctness-PARone}
For each $n \in \Nat$, $\PIS{1}{n}$ computes $\PF{n}$.
\end{proposition}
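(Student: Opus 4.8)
The plan is to prove that $\PIS{1}{n}$ computes $\PF{n}$ by induction on $n$, using the invariant already hinted at in the text: after the block $\Conc{i=1}{k} (\ptst{\inbr{i}.\getbr} \conc \auxbr{1}.\negbr)$ has been executed, the content of $\auxbr{1}$ equals $\PF{k}(b_1,\ldots,b_k)$, i.e.\ $\True$ iff the number of $\True$'s among $b_1,\ldots,b_k$ is odd. The case $n = 0$ is immediate since $\PIS{1}{0} = \halt$ leaves $\outbr$ at $\False = \PF{0}$. For $n \geq 1$ the argument splits into establishing the invariant and then reading it off.

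For the invariant, I would argue by induction on $k$. The base case $k = 0$ holds because $\auxbr{1}$ starts at $\False$, which is $\PF{0}$. For the inductive step, suppose $\auxbr{1}$ holds $\PF{k}(b_1,\ldots,b_k)$ just before the $(k{+}1)$st block. The instruction $\ptst{\inbr{1}.\getbr}$ — more precisely $\ptst{\inbr{k+1}.\getbr}$ in that block — tests $b_{k+1}$: if $b_{k+1} = \True$, execution proceeds to $\auxbr{1}.\negbr$, which complements $\auxbr{1}$, so it now holds $\neg\PF{k}(b_1,\ldots,b_k) = \PF{k+1}(b_1,\ldots,b_{k+1})$; if $b_{k+1} = \False$, the $\auxbr{1}.\negbr$ instruction is skipped, so $\auxbr{1}$ still holds $\PF{k}(b_1,\ldots,b_k) = \PF{k+1}(b_1,\ldots,b_{k+1})$. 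Note that $\auxbr{1}.\negbr$ always produces reply $\True$ or $\False$ but, being a plain basic instruction, execution always proceeds to the next primitive instruction regardless, so control flow through the block is exactly as described. This establishes the invariant for $k+1$, and hence for $k = n$.

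Finally, with $\auxbr{1}$ holding $\PF{n}(b_1,\ldots,b_n)$ after the $n$ blocks, execution reaches $\ptst{\auxbr{1}.\getbr} \conc \outbr.\setbr{\True} \conc \halt$. If $\PF{n}(b_1,\ldots,b_n) = \True$, the positive test succeeds, $\outbr.\setbr{\True}$ is executed, and $\outbr$ becomes $\True$; then $\halt$ terminates. If $\PF{n}(b_1,\ldots,b_n) = \False$, the positive test fails, $\outbr.\setbr{\True}$ is skipped, $\outbr$ retains its initial value $\False$, and $\halt$ terminates. In both cases $\outbr$ holds $\PF{n}(b_1,\ldots,b_n)$ at termination, so $\PIS{1}{n}$ computes $\PF{n}$ with $k = 1$ auxiliary register.

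I do not anticipate a serious obstacle here; the argument is more routine than that for Proposition~\ref{proposition-correctness-PARzero} because the auxiliary register makes the invariant transparent. The only point requiring a little care is the bookkeeping of control flow through each $\ptst{\inbr{i}.\getbr} \conc \auxbr{1}.\negbr$ block — specifically, observing that skipping $\auxbr{1}.\negbr$ when the test on $\inbr{i}$ fails lands execution precisely at the start of the next block, and that $\auxbr{1}.\negbr$ never diverts control — so that the composition of $n$ such blocks behaves as the iterated invariant update.
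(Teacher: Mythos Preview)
Your proof is correct and follows essentially the same approach as the paper: both establish inductively that after the first $k$ blocks $\ptst{\inbr{i}.\getbr} \conc \auxbr{1}.\negbr$ have been executed, the content of $\auxbr{1}$ equals $\PF{k}(b_1,\ldots,b_k)$, and then read this off via the final test. The only cosmetic difference is that the paper inducts on the arity $n$ of the instruction sequence (with base case $n=1$) whereas you fix $n$ and induct on the block index $k$ (with base case $k=0$), but the invariant and the case analysis on $b_{k+1}$ are identical.
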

\begin{proof}
For $n < 1$, the proof is trivial.
We prove that $\PIS{1}{n}$ computes $\PF{n}$ for all $n \geq 1$ by 
induction on $n$.
The basis step consists of proving that $\PIS{1}{1}$ computes $\PF{1}$.
This follows easily by an exhaustive case distinction over the content 
of $\inbr{1}$.
The inductive step is proved in the following way. 
It follows directly from the induction hypothesis that, after the 
$\ptst{\inbr{i}.\getbr} \conc \auxbr{1}.\negbr$ has been executed $n$ 
times, the content of $\auxbr{1}$ is $\True$ if the number of $\True$'s 
in the contents of $\inbr{1},\ldots,\inbr{n}$ is odd and otherwise the 
content of $\auxbr{1}$ is $\False$.
From this, it follows easily by an exhaustive case distinction over the 
content of $\inbr{n{+}1}$ that $\PIS{1}{n{+}1}$ computes $\PF{n{+}1}$.
\qed
\end{proof}
Because the instruction sequences defined above compute the parity 
functions as well, we have a second result about the complexity of 
$\PFF$.
\begin{theorem}
\label{theorem-complexity-class-1-pos}
$\PFF \in \nuc{\ARISbr{1}}{B(2 \mul n + 3)}$.
\end{theorem}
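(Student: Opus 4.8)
The plan is to mirror the proof of Theorem~\ref{theorem-complexity-class-0-pos}: exhibit, for each $n$, an instruction sequence in $\ARISbr{1}$ of length at most $2 \mul n + 3$ that computes $\PF{n}$, and then appeal to the definition of $\nuc{\cdot}{B(\cdot)}$. The instruction sequences $\PIS{1}{n}$ already defined are the obvious candidates, since Proposition~\ref{proposition-correctness-PARone} tells us they compute $\PF{n}$, so the work left is purely bookkeeping about their size and about which registers they mention.

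First I would compute $\psize(\PIS{1}{n})$ by a simple calculation. For $n = 0$ we have $\psize(\PIS{1}{0}) = 1$, and for $n \geq 1$ the term $\Conc{i=1}{n}(\ptst{\inbr{i}.\getbr} \conc \auxbr{1}.\negbr)$ contributes $2 \mul n$ primitive instructions and the trailing part $\ptst{\auxbr{1}.\getbr} \conc \outbr.\setbr{\True} \conc \halt$ contributes $3$ more, so $\psize(\PIS{1}{n}) = 2 \mul n + 3$. In both cases $\psize(\PIS{1}{n}) \leq 2 \mul n + 3$. Next I would observe that the only auxiliary register occurring in $\PIS{1}{n}$ is $\auxbr{1}$ --- the only primitive instructions of the forms $\auxbr{i}.c$, $\ptst{\auxbr{i}.c}$, $\ntst{\auxbr{i}.c}$ that appear are $\auxbr{1}.\negbr$ and $\ptst{\auxbr{1}.\getbr}$, both with $i = 1$ --- so $\PIS{1}{n} \in \ARISbr{1}$ for every $n \in \Nat$.

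Combining these two observations with Proposition~\ref{proposition-correctness-PARone}, which gives that $\PIS{1}{n}$ computes $\PF{n}$, we obtain, for every $n \in \Nat$, an $X \in \ARISbr{1}$ (namely $X = \PIS{1}{n}$) that computes $\PF{n}$ with $\psize(X) \leq 2 \mul n + 3$. By the definition of $\nuc{\ARISbr{1}}{B(2 \mul n + 3)}$ (with the witnessing function $h$ taken to be $h(n) = 2 \mul n + 3$ itself, and $m = 0$), this yields $\PFF \in \nuc{\ARISbr{1}}{B(2 \mul n + 3)}$. I do not expect any real obstacle here; the only points requiring a moment's care are getting the length count right and checking the membership in $\ARISbr{1}$, both of which are routine once the shape of $\PIS{1}{n}$ is spelled out.
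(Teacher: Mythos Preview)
Your proposal is correct and follows essentially the same approach as the paper: compute $\psize(\PIS{1}{n})$ directly (obtaining $1$ for $n=0$ and $2n+3$ for $n \geq 1$), invoke Proposition~\ref{proposition-correctness-PARone}, and conclude. The paper's proof is terser and leaves the membership $\PIS{1}{n} \in \ARISbr{1}$ implicit, but your added remarks on that point and on the witnessing function $h$ are harmless elaborations of the same argument.
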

\begin{proof}
By simple calculations, we obtain that $\psize(\PIS{1}{0}) = 1$ and, for 
each $n > 0$, $\psize(\PIS{1}{n}) = 2 \mul n + 3$. 
From this and Proposition~\ref{proposition-correctness-PARone}, it 
follows immedi\-ately that $\PFF \in \nuc{\ARISbr{1}}{B(2 \mul n + 3)}$.
\qed
\end{proof}

Theorems~\ref{theorem-complexity-class-0-pos} 
and~\ref{theorem-complexity-class-1-pos} give rise to the question 
whether $\PFF \notin \nuc{\ISbrna}{B(2 \mul n + 3)}$.
In Section~\ref{sect-shorter-with-auxbr}, this question will be answered 
in the affirmative.
It is still an open question whether there exist a $k \geq 1$ and an
$\funct{f}{\Nat}{\Nat}$ with $f(n) < 2 \mul n + 3$ for all $n > 0$ such
that $\PFF \in \nuc{\ARISbr{k}}{B(f(n))}$.

According to the view taken in~\cite{BM14a}, differences in the number
of auxiliary Boolean registers whose use contributes to computing the 
function at hand always go with algorithmic differences.
This view is supported by the instruction sequences $\PIS{0}{n}$ and 
$\PIS{1}{n}$: in addition to having different lengths, they express 
undeniably quite different algorithms to compute $\PF{n}$ (for $n > 1$). 

\section{Shorter Instruction Sequences with Auxiliary Registers}
\label{sect-shorter-with-auxbr}

In Section~\ref{sect-computing-pf}, we have described how the parity 
functions can be computed by instruction sequences without the use of 
auxiliary Boolean registers and with the use of auxiliary Boolean 
registers.
In the current section, we show that the smaller lengths of the 
instruction sequences in the latter case cannot be obtained without the 
use of auxiliary Boolean registers.
In other words, we show that 
$\PFF \notin \nuc{\ISbrna}{B(2 \mul n + 3)}$.

First, we introduce some notation that will be used in this section.
We write $\ol{b}$, where $b\in \Bool$, for the complement of $b$, i.e.\
$\ol{\False} = \True$ and $\ol{\True} = \False$; 
we write $\ol{f}$, where $f$ is an $n$-ary Boolean function, for the
unique $n$-ary Boolean function $g$ such that 
$g(b_1,\ldots,b_n) = \ol{f(b_1,\ldots,b_n)}$; and
we write $\ol{F}$, where $F$ is a Boolean function family 
$\indfam{f_n}{n \in \Nat}$, for the Boolean function family
$\indfam{\ol{f_n}}{n \in \Nat}$.

We know the following about the complexity of $\ol{\PFF}$.
\begin{proposition}
\label{proposition-complement}
For each $k \in \Nat$, $\PFF \in \nuc{\ISbrna}{B(k)}$ implies
$\ol{\PFF} \in \nuc{\ISbrna}{B(k)}$.
\end{proposition}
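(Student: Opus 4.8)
The plan is to take any instruction sequence $X \in \ISbrna$ that computes $\PF{n}$ with $\psize(X) \leq k$ and transform it into an instruction sequence $Y \in \ISbrna$ that computes $\ol{\PF{n}}$ with $\psize(Y) \leq k$. The natural idea is to post-process the output register, i.e.\ to complement its content just before termination. First I would observe that $X$ computes $\PF{n}$ using only input and output registers (no auxiliary registers), so the only basic instructions occurring in $X$ that touch the output register are of the form $\outbr.\setbr{b}$, possibly wrapped in a test instruction, and basic instructions of the form $\outbr.\getbr$; since the output register starts at $\False$ and $X$ must leave it holding $\PF{n}(b_1,\ldots,b_n)$, a clean way to flip the final value is to replace, in a suitable way, the effect of the set-instructions on $\outbr$.

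The cleanest transformation that preserves length exactly is to swap the roles of $\False$ and $\True$ on the output register: replace every primitive instruction in $X$ of the form $\outbr.\setbr{\False}$ by $\outbr.\setbr{\True}$ and vice versa, and likewise inside positive and negative test instructions. Write $Y$ for the result. This leaves the number of instructions unchanged, so $\psize(Y) = \psize(X) \leq k$, and $Y$ is still in $\ISbrna$. The key step is then to argue that $Y$ computes $\ol{\PF{n}}$. The obstacle here, and the part that needs the most care, is that the output register is initialised to $\False$ in the definition of ``computes'', not to $\True$; so a naive ``the content of $\outbr$ in a run of $Y$ is always the complement of its content in the corresponding run of $X$'' is false at the start of execution. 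I would handle this by checking that in $X$ the output register is write-only in the sense that matters — more precisely, I expect that any $X$ computing $\PF{n}$ without auxiliary registers can be assumed, without loss of length, to read the output register at most in ways that do not affect control flow, or alternatively I would prove the simpler fact that $\PF{n}$ and $\ol{\PF{n}}$ are computed by sequences of equal minimal length by a direct symmetry argument on the semantics.

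A more robust route, avoiding delicate claims about the internal structure of $X$, is the following: since $X$ computes $\PF{n}$ and $\PF{n}$ is not the constant-false function for $n \geq 1$, and since the definition of ``computes'' only constrains the content of $\outbr$ at termination (starting from $\False$), I would instead build $Y$ by taking $X$ and, at the single point where execution of $X$ terminates, intercept the $\halt$. But $X$ may contain several $\halt$ instructions, and prepending a complement instruction to each would change the length. The correct length-preserving fix is again the output-value-swap described above, together with the observation that we may first bring $X$ into a normal form — using the fact from \cite{BM13a} that is cited for $\ISbrna$ — in which $\outbr$ is never tested, only set; this normalisation does not increase length, and for such $X$ the swap manifestly turns a computation of $\PF{n}$ into a computation of $\ol{\PF{n}}$, because the output register's final content is exactly the last value written to it (or $\False$ if nothing was written, which for $\PF{n}$ with $n \geq 1$ cannot be the uniform behaviour), and swapping all written values complements that final content.

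I expect the main obstacle to be exactly this normal-form point: justifying that one may assume, at no cost in length, that the output register is write-only. If the preliminaries do not already provide such a normal form, I would fall back to the concrete construction and argue semantically that a positive test $\ptst{\outbr.\getbr}$ in $X$ reads the current output content and that, since in a correct computation this content before the final write is irrelevant to the result, the swap still yields a correct computation of $\ol{\PF{n}}$ — but this requires a short invariant argument on executions of $X$ versus $Y$, run in lockstep, showing their states differ only by complementation of $\outbr$ from the first write onwards and the control flow is unaffected because $\outbr$-tests are never reached before the first write. Either way, the length bound is immediate from the construction, so the entire content of the proof is the correctness of the transformed sequence.
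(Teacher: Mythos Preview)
Your approach has a genuine gap. Swapping every $\outbr.\setbr{\True}$ with $\outbr.\setbr{\False}$ does \emph{not} complement the computed function, because the output register is initialised to $\False$ and an instruction sequence computing $\PF{n}$ may simply leave $\outbr$ untouched on inputs where the result is $\False$. Concretely, take $X = \ptst{\inbr{1}.\getbr} \conc \outbr.\setbr{\True} \conc \halt$, which computes $\PF{1}$. Your transform gives $Y = \ptst{\inbr{1}.\getbr} \conc \outbr.\setbr{\False} \conc \halt$; on input $\False$ execution skips straight to $\halt$ and the output is still the initial $\False$, not the required $\True = \ol{\PF{1}}(\False)$. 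Your acknowledgement that the ``nothing written'' case ``cannot be the uniform behaviour'' is not enough: it need only occur on \emph{some} inputs to break correctness. The normal-form detour does not help either --- first, note that in the instruction set of this paper there is no $\outbr.\getbr$ at all, so the output register is already write-only by definition and no normalisation is needed; second, write-only does not mean written-on-every-path, and it is precisely the unwritten paths that kill the argument.

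The paper takes an entirely different route that sidesteps this difficulty: it exploits the symmetry of parity under complementing \emph{inputs} rather than trying to complement the output. Since $\ol{\PF{n}}(b_1,\ldots,b_n) = \PF{n}(\ol{b_1},b_2,\ldots,b_n)$ (and, for odd $n$, also $= \PF{n}(\ol{b_1},\ldots,\ol{b_n})$), one obtains a sequence for $\ol{\PF{n}}$ from one for $\PF{n}$ by interchanging $\ptst{\inbr{i}.\getbr}$ and $\ntst{\inbr{i}.\getbr}$ for the relevant $i$. Input registers support only $\getbr$, so this swap exactly simulates running the original sequence on complemented inputs, and the length is manifestly unchanged. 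This input-side trick is specific to the parity family --- which is all the proposition claims --- and avoids any reasoning about when or whether $\outbr$ is written.
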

\begin{proof}
It is sufficient to prove for an arbitrary $n \in \Natpos$ that any 
instruction sequence from $\ISbrna$ that computes $\PF{n}$ can be 
transformed into one with the same length that computes $\ol{\PF{n}}$.
So let $n \in \Natpos$, and let $X \in \ISbrna$ be such that $X$ 
computes $\PF{n}$.
We distinguish two cases: $n$ is odd and $n$ is even.

If $n$ is odd, then 
$\ol{\PF{n}}(b_1,\ldots,b_n) = \PF{n}(\ol{b_1},\ldots,\ol{b_n})$.
This implies that $\PF{n}$ is computed by the instruction sequence 
from $\ISbrna$ obtained from $X$ by replacing, for each 
$i \in \set{1,\ldots,n}$, $\ptst{\inbr{i}.\getbr}$ by 
$\ntst{\inbr{i}.\getbr}$ and vice versa.

If $n$ is even, then 
$\ol{\PF{n}}(b_1,\ldots,b_n) = \PF{n}(\ol{b_1},b_2,\ldots,b_n)$.
This implies that $\PF{n}$ is computed by the instruction sequence 
from $\ISbrna$ obtained from $X$ by replacing 
$\ptst{\inbr{1}.\getbr}$ by $\ntst{\inbr{1}.\getbr}$ and vice versa.
\qed
\end{proof}

The following three lemmas bring us step by step to the main result of 
this section, namely $\PFF \notin \nuc{\ISbrna}{B(2 \mul n + 3)}$.
\begin{lemma}
\label{lemma-complexity-class-0-neg-a}
Let $X \in \ISbrna$ be such that $X$ computes $\PF{2}$.
Then $\psize(X) \geq 6$.
\end{lemma}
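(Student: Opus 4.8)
The plan is to establish the lower bound $\psize(X) \geq 6$ by first pinning down how few primitive instructions an instruction sequence in $\ISbrna$ needs merely in order to read both input registers and still be able to produce two distinct output values. First I would observe that $X$ computes $\PF{2}$, so on input $(\False,\False)$ it must leave $\outbr$ holding $\False$, while on input $(\True,\False)$ it must leave $\outbr$ holding $\True$; similarly for the other two inputs. Since the output register starts at $\False$, any execution that ends with $\outbr = \True$ must execute at least one instruction of the form $\outbr.\setbr{\True}$ (instructions of the form $\outbr.\setbr{\False}$ are useless for flipping it, and without auxiliary registers there is no other way to change $\outbr$). Moreover, because $\PF{2}$ depends on both arguments, $X$ must contain at least one primitive instruction reading $\inbr{1}$ and at least one reading $\inbr{2}$: if some input register were never consulted on any execution path, flipping that input could not change the output.

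Next I would count. A termination instruction $\halt$ is needed for at least one execution path to terminate (otherwise that path ends in inaction and $\outbr$ is never guaranteed anything). So we already have: at least one test instruction on $\inbr{1}$, at least one on $\inbr{2}$, at least one $\outbr.\setbr{\True}$, and at least one $\halt$ --- four instructions. To push this to $6$, the key point is a case analysis on how the single required $\outbr.\setbr{\True}$ instruction is positioned relative to the reads of the two inputs. The crucial observation is that on the input $(\False,\False)$ the execution must \emph{avoid} the $\outbr.\setbr{\True}$ instruction (so it must be guarded by test instructions that fail on that input), whereas on inputs with exactly one $\True$ the execution must \emph{reach} it; and on $(\True,\True)$ it must again be avoided (or cancelled, but it cannot be cancelled without auxiliary registers). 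I would argue that with only a single $\outbr.\setbr{\True}$ instruction and only the permitted instruction kinds, the control flow separating the "odd" inputs from the "even" inputs forces at least two test instructions on the inputs plus at least one forward jump instruction to route the even-parity path around the set-instruction --- and that a forward jump, a $\halt$, an $\outbr.\setbr{\True}$, and two input tests already give five, with the remaining instruction forced by the need to handle all four input combinations correctly (for instance, a second test on one of the inputs, since a single test on each input cannot realize the parity pattern with just one jump).

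Concretely, I would enumerate the few layouts with $\psize(X) \leq 5$ and derive a contradiction for each. With at most five instructions, at most one can be a forward jump beyond those four mandatory instructions, or we could have two input tests and no spare jump. In the "two input tests, one set, one halt, one jump" layout, I would show that the jump can route at most one of the three "even" executions (namely $(\False,\False)$, $(\True,\True)$) away from the set-instruction in a way consistent with the "odd" executions reaching it --- the two even inputs require the execution to diverge \emph{after} reading the inputs in incompatible ways, which a single test-and-jump structure cannot accommodate. In the "three input tests" layout there is no room for both a $\halt$ and an $\outbr.\setbr{\True}$ and a jump within five instructions, and without a jump the straight-line-with-skips structure of test instructions cannot realize parity on two bits (a skip only hops over one instruction). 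Ruling out these finitely many skeletons is where the real work lies: the main obstacle will be handling the interaction between the skip behaviour of test instructions and the single forward jump carefully enough to be certain no clever five-instruction arrangement computes $\PF{2}$. I would organize this as an exhaustive but short case distinction on the multiset of instruction kinds occurring in $X$ under the hypothesis $\psize(X) \leq 5$, showing each case fails, and conclude $\psize(X) \geq 6$.
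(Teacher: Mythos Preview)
Your plan is workable in principle but takes a noticeably harder road than the paper, and as written it is only a sketch: the promised ``exhaustive but short case distinction'' on five-instruction layouts is never actually carried out, and the informal remarks about routing even-parity paths around the set instruction do not yet exclude all configurations.

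The paper avoids any enumeration of layouts by extracting three counting observations that add up to~$6$ directly. The one you miss entirely is observation~(1): not merely one test per input, but \emph{three} tests in total are forced --- for one of the two inputs there must be at least two occurrences of a test on that input. The reason is that $\PF{2}$ depends on $\inbr{2}$ for \emph{each} fixed value of $\inbr{1}$ (and symmetrically); a single test on each input cannot keep the output sensitive to both registers across all four input combinations. You also only partially capture observation~(3): besides the one mandatory $\halt$ following the last $\outbr.\setbr{\True}$, there must be either a second $\halt$ or a forward jump whose effect is to skip that last $\outbr.\setbr{\True}$, since otherwise the output can never remain $\False$. Together with observation~(2) (at least one $\outbr.\setbr{\True}$ followed by a $\halt$), this gives $3+1+2=6$ with no case analysis on five-instruction sequences at all.

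So the difference is: you count to~$4$ and then try to close the remaining gap by brute force on instruction-sequence skeletons, whereas the paper sharpens the counting to reach~$6$ outright. Your route could be completed, but the missing sharper count (three tests, two terminators-or-a-jump) is exactly what makes the enumeration unnecessary.
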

\begin{proof}
Let $X \in \ISbrna$ be such that $X$ computes $\PF{2}$.
The following observations can be made about $X$:
\begin{itemize}
\item[(1)]
for one $i \in \set{1,2}$, there must be at least one instruction of 
the form $\ptst{\inbr{i}.\getbr}$ or the form $\ntst{\inbr{i}.\getbr}$ 
in $X$ and, for the other $i \in \set{1,2}$, there must be at least 
two instructions of the form $\ptst{\inbr{i}.\getbr}$ or the form 
$\ntst{\inbr{i}.\getbr}$ in $X$ --- because otherwise the final content 
of $\outbr$ will not in all cases be dependent on the content of both 
$\inbr{1}$ and $\inbr{2}$;
\item[(2)]
there must be at least one occurrence of $\outbr.\setbr{\True}$ in $X$ 
and the last occurrence of $\outbr.\setbr{\True}$ in $X$ must precede an 
occurrence of $\halt$ --- because otherwise the final content of 
$\outbr$ will never be $\True$;
\item[(3)]
there must be at least two occurrences of $\halt$ in $X$ unless there 
occurs an instruction of the form $\fjmp{l}$ in $X$ whose effect is that 
the last occurrence of $\outbr.\setbr{\True}$ in $X$ is skipped --- 
because otherwise the final content of $\outbr$ will never be $\False$.
\end{itemize}
It follows immediately from these observations that $\psize(X) \geq 6$.
\qed
\end{proof}

In the proofs of the next two lemmas, we use the term \emph{test on} 
$\inbr{i}$ to refer to an instruction of the form 
$\ptst{\inbr{i}.\getbr}$ or the form $\ntst{\inbr{i}.\getbr}$, and 
we use the term \emph{test} to refer to an instruction that is a test on 
$\inbr{1}$ or a test on $\inbr{2}$.
\begin{lemma}
\label{lemma-complexity-class-0-neg-b}
Let $X \in \ISbrna$ be such that $X$ computes $\PF{2}$.
Then $\psize(X) \geq 7$.
\end{lemma}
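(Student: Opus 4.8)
The plan is to refine the case analysis from the previous lemma by showing that the configuration witnessing $\psize(X) = 6$ is impossible, thereby pushing the lower bound from $6$ to $7$. By Lemma~\ref{lemma-complexity-class-0-neg-a}, any $X \in \ISbrna$ computing $\PF{2}$ has $\psize(X) \geq 6$, so I would argue by contradiction: suppose $\psize(X) = 6$. Then the three observations in the proof of Lemma~\ref{lemma-complexity-class-0-neg-a} must each be witnessed as tightly as possible, since their contributions already sum to $6$. Concretely: there are exactly three tests (one on some $\inbr{i}$, two on the other), exactly one occurrence of $\outbr.\setbr{\True}$, and either exactly two occurrences of $\halt$ (with no other instructions) or exactly one occurrence of $\halt$ together with exactly one forward jump instruction that skips the unique $\outbr.\setbr{\True}$. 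In particular there is no room for any $\inbr{i}.\getbr$ as a plain basic instruction, no second $\outbr.\setbr{\True}$, and no slack anywhere.

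Next I would eliminate each of the two sub-cases. In the two-$\halt$ sub-case, $X$ consists precisely of three tests, one $\outbr.\setbr{\True}$, and two $\halt$'s, in some order; I would argue that with only three tests arranged around a single $\outbr.\setbr{\True}$ one cannot realise all four required input-output behaviours of $\PF{2}$ --- the key point being that the unique $\outbr.\setbr{\True}$ can be reached along at most a limited set of test-outcome paths, and the parity function requires it to be reached on exactly the two diagonal input combinations $(\True,\False)$ and $(\False,\True)$ but avoided on $(\False,\False)$ and $(\True,\True)$, which forces the reading order and branching structure into a contradiction with having only three tests and no auxiliary jump. In the one-$\halt$, one-$\fjmp{l}$ sub-case, the jump is committed to skipping over $\outbr.\setbr{\True}$ unconditionally (it is a plain forward jump, not a test), so I would show that whichever path leads into that jump can never set $\outbr$ to $\True$, again leaving only the fall-through path to $\outbr.\setbr{\True}$, and the same counting/ordering argument as before yields a contradiction. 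A clean way to organise both sub-cases is to trace, for each of the four inputs, which instructions are executed, and observe that the six available positions cannot simultaneously route exactly two inputs through the lone $\outbr.\setbr{\True}$ and the other two around it while still reading both registers on every path.

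The main obstacle I anticipate is the combinatorial bookkeeping in the two-$\halt$ sub-case: there are several admissible orderings of the three tests, the $\outbr.\setbr{\True}$, and the two $\halt$'s, and for a forward-jump-free fragment the test instructions themselves act as conditional one-step skips, so the reachability analysis has to account for how a positive versus negative test on a given register interacts with the instruction immediately following it. I would handle this by first fixing, without loss of generality, that $\inbr{1}$ is the register tested once and $\inbr{2}$ the register tested twice (using a symmetry argument, or else noting that the reading order of the two registers can be normalised), then observing that the single test on $\inbr{1}$ partitions the four inputs into two pairs according to the content of $\inbr{1}$, and within each pair the two tests on $\inbr{2}$ must separate the matching from the non-matching value; counting the instructions strictly between and after these tests then shows there is no position left for both a reachable $\outbr.\setbr{\True}$ on the correct inputs and a reachable $\halt$ that leaves $\outbr$ at $\False$ on the others. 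This contradiction establishes $\psize(X) \neq 6$, hence $\psize(X) \geq 7$.
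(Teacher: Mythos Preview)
Your plan is sound in outline and would succeed if carried out, but it takes a noticeably heavier route than the paper.  Two differences are worth flagging.

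First, the paper dispenses with your second sub-case (one $\halt$ plus a jump over $\outbr.\setbr{\True}$) at the outset by a without-loss-of-generality reduction: if such a jump is present, $X$ can be rewritten as an instruction sequence of the same or smaller length in which it is absent.  So only the two-$\halt$ configuration survives, and your separate treatment of the jump case is unnecessary.

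Second, and more importantly, the paper does not perform the four-input path-tracing you propose.  Having pinned the structure down to $X = u_1 \conc u_2 \conc u_3 \conc \halt \conc \outbr.\setbr{\True} \conc \halt$ with $u_3$ a test, it simply deletes the first $\halt$, obtaining $Y = u_1 \conc u_2 \conc u_3 \conc \outbr.\setbr{\True} \conc \halt$ of length $5$.  Because $u_3$ is a test, removing the instruction it would have skipped to swaps the two outcomes of $u_3$, so $Y$ computes $\ol{\PF{2}}$.  Proposition~\ref{proposition-complement} then yields a $Z \in \ISbrna$ computing $\PF{2}$ with $\psize(Z) \leq 5$, contradicting Lemma~\ref{lemma-complexity-class-0-neg-a}.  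Thus the paper recycles the already-established bound via the complement trick, whereas your approach re-derives the impossibility from scratch by enumerating orderings and tracing executions.  Your method is self-contained and would work, but the paper's is shorter and avoids the combinatorial bookkeeping you correctly identify as the main obstacle.
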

\begin{proof}
Let $X \in \ISbrna$ be such that $X$ computes $\PF{2}$.
Then, by Lemma~\ref{lemma-complexity-class-0-neg-a}, $\psize(X) \geq 6$.
It remains to be proved that $\psize(X) \neq 6$.
Below this is proved by contradiction, using the following assumption 
with respect to $X$: 
\begin{itemize}
\item
there does not occur an instruction of the form $\fjmp{l}$ in $X$ whose 
effect is that the last occurrence of $\outbr.\setbr{\True}$ in $X$ is 
skipped.
\end{itemize}
This assumption can be made without loss of generality because, if it 
is not met, $X$ can be replaced by an instruction sequence from 
$\ISbrna$ of the same or smaller length by which it is met.

Assume that $\psize(X) = 6$, and 
suppose that $X = u_1 \conc \ldots \conc u_6$.
The first occurrence of $\halt$ must be preceded by at least one test on 
$\inbr{1}$ and one test on $\inbr{2}$, because otherwise the final 
content of $\outbr$ will in some cases not be dependent on the content 
of both $\inbr{1}$ and $\inbr{2}$.
From this and observations~(1), (2), and (3) from the proof of 
Lemma~\ref{lemma-complexity-class-0-neg-a}, it follows that either $u_3$ 
or $u_4$ must be $\halt$ and that $u_5$ must be $\outbr.\setbr{\True}$ 
and $u_6$ must be $\halt$.
However, if $u_3 \equiv \halt$, then termination can take place after 
performing only one test.
Because this means that the final content of $\outbr$ will still in some 
cases not be dependent on the content of both $\inbr{1}$ and $\inbr{2}$, 
it is impossible that $u_3 \equiv \halt$.
So, 
$X = u_1 \conc u_2 \conc u_3 \conc \halt \conc
 \outbr.\setbr{\True} \conc \halt$. 
Let $Y = u_1 \conc u_2 \conc u_3 \conc \outbr.\setbr{\True} \conc \halt$.
Then $\psize(Y) = 5$ and, because $X$ computes $\PF{2}$ and $u_3$ is a 
test by observation~(1) from the proof of 
Lemma~\ref{lemma-complexity-class-0-neg-a}, $Y$ computes $\ol{\PF{2}}$.
Hence, by Proposition~\ref{proposition-complement}, there exists a
$Z \in \ISbrna$ that computes $\PF{2}$ such that $\psize(Z) \leq 5$.
This contradicts Lemma~\ref{lemma-complexity-class-0-neg-a}.
\qed
\end{proof}

Lemma~\ref{lemma-complexity-class-0-neg-a} is used in the proof of 
Lemma~\ref{lemma-complexity-class-0-neg-b}.
Lemma~\ref{lemma-complexity-class-0-neg-b}, in its turn, is used in a 
similar way below in the proof of 
Lemma~\ref{lemma-complexity-class-0-neg-c}.

A remark in advance about the proof of 
Lemma~\ref{lemma-complexity-class-0-neg-c} is perhaps in order.
A proof of this lemma needs basically an extremely extensive case 
distinction.%
\footnote
{At bottom, there are in the order of $10^7$ different instruction
 sequences to consider.
 By the assumptions made at the beginning of the proof of the lemma
 this number can be reduced by a factor of $10$.
}
In the proof given below, the extent of the case distinction is strongly
reduced by using various apposite properties of the instruction 
sequences concerned.
The reduction, which is practically necessary, has led to a proof that 
might make the impression to be unstructured.
\begin{lemma}
\label{lemma-complexity-class-0-neg-c}
Let $X \in \ISbrna$ be such that $X$ computes $\PF{2}$.
Then $\psize(X) > 7$.
\end{lemma}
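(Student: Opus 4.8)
The plan is to argue by contradiction, mirroring the structure of the proof of Lemma~\ref{lemma-complexity-class-0-neg-b}. Suppose $X \in \ISbrna$ computes $\PF{2}$ with $\psize(X) = 7$; write $X = u_1 \conc \ldots \conc u_7$. As in the previous lemma I would first make two normalizing assumptions without loss of generality: (a) there is no $\fjmp{l}$ in $X$ whose effect is to skip the last occurrence of $\outbr.\setbr{\True}$, since otherwise $X$ can be replaced by a no-longer instruction sequence meeting this condition, and (b) there is no $\halt$ strictly before the last test, since reaching $\halt$ after fewer than two tests would leave the output independent of one of $\inbr{1}, \inbr{2}$ in some case. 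Under these assumptions, observations (1)–(3) from Lemma~\ref{lemma-complexity-class-0-neg-a} force a fairly rigid shape on the tail of $X$: there must be a $\halt$, a preceding $\outbr.\setbr{\True}$, and (to make the output $\False$ in some case, given assumption (a)) a second $\halt$ reachable without executing that $\outbr.\setbr{\True}$. Counting instructions, $X$ contains at least two tests on one input register and one on the other (observation (1)), at least one $\outbr.\setbr{\True}$, and at least two $\halt$'s — already six instructions — so the seventh instruction is either an extra test, an extra $\fjmp{l}$, or an extra $\outbr.\setbr{\True}$; I would dispose of each possibility in turn.

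The key reduction step, reused from Lemma~\ref{lemma-complexity-class-0-neg-b}, is this: whenever the instruction immediately before a terminal block $\outbr.\setbr{\True} \conc \halt$ is a $\halt$, and the instruction before \emph{that} is a test, one can delete the intervening $\halt$ to obtain a shorter $Y$ that computes $\ol{\PF{2}}$ (because execution now "falls through" the deleted $\halt$ into $\outbr.\setbr{\True}$ exactly when the last test has the opposite reply). Then Proposition~\ref{proposition-complement} yields a $Z \in \ISbrna$ of length $\psize(Y)$ computing $\PF{2}$, and if $\psize(Y) \leq 6$ this contradicts Lemma~\ref{lemma-complexity-class-0-neg-b}. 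So the real work is to show that in every surviving configuration of $u_1,\ldots,u_7$ one can force such a pattern (a $\halt$ sandwiched between a test and the final $\outbr.\setbr{\True} \conc \halt$), or else directly exhibit inputs on which $X$ gives the wrong answer. I would organize this by the position of the last $\halt$ (it must be $u_7$ by observation (2) and assumption (b), since $u_7$ follows the last test and must be reachable), then by the position of the last $\outbr.\setbr{\True}$ (which must be $u_6$ or $u_5$), then by where the first $\halt$ sits, using throughout that every $\halt$ before the last test contradicts assumption (b) and that the output must be sensitive to both inputs.

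The main obstacle I anticipate is precisely the breadth of this case analysis: with seven slots and several instruction types per slot the naive count is enormous (the footnote quotes order $10^7$), and the art is to prune aggressively using structural invariants — (i) exactly one reachable $\outbr.\setbr{\True}$ matters and it must be reached iff $\PF{2}$ is $\True$; (ii) reading order and polarity of the tests must encode the running parity, as in the remark preceding Proposition~\ref{proposition-correctness-PARzero}; (iii) no jump target may land past the program or on instruction $0$ without causing inaction on a live path. Combining these, the "extra $\fjmp{l}$" case collapses because with only one spare instruction a jump cannot both route one test-reply to $\outbr.\setbr{\True}$ and route the other to a $\halt$ while keeping two tests on some register; the "extra test" and "extra $\outbr.\setbr{\True}$" cases collapse via the deletion/complement reduction to the length-$6$ bound already ruled out. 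Assembling these fragments into a contradiction in all cases completes the proof.
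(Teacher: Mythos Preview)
Your overall strategy---assume $\psize(X)=7$, pin down the tail of $X$ using observations (1)--(3), and repeatedly reduce to a length-$\leq 6$ sequence contradicting Lemma~\ref{lemma-complexity-class-0-neg-b}---is exactly the paper's approach. But there is a genuine gap in your normalizing assumption~(b).

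You claim that without loss of generality there is no $\halt$ strictly before the last test, on the grounds that ``reaching $\halt$ after fewer than two tests would leave the output independent of one of $\inbr{1},\inbr{2}$.'' That justification does not support the assumption: a $\halt$ placed before the \emph{last} test can still sit \emph{after} two tests that already cover both registers. The paper does not make this assumption, and its case analysis in fact runs directly into such configurations---for instance $u_1\equiv\ptst{\inbr{1}.\getbr}$, $u_2\equiv\fjmp{3}$, $u_3$ a test, $u_4\equiv\halt$, $u_5\equiv\ntst{\inbr{2}.\getbr}$, $u_6\equiv\outbr.\setbr{\True}$, $u_7\equiv\halt$, where the $\halt$ at position~4 precedes the test at position~5 and is reached after tests on both inputs. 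These cases have to be eliminated one by one by exhibiting wrong outputs, not normalized away in advance. Since your subsequent organization (position of last $\halt$, then of last $\outbr.\setbr{\True}$, then of first $\halt$) leans on~(b), the plan as written does not close.

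A second, smaller issue: your instruction-count (three tests, one $\outbr.\setbr{\True}$, two $\halt$'s, plus one ``extra'') and the claim that each extra-type ``collapses'' is too optimistic. Your deletion/complement reduction needs the specific tail pattern $\mathit{test}\conc\halt\conc\outbr.\setbr{\True}\conc\halt$, which is not forced in the length-$7$ case (e.g.\ the configuration above has $\mathit{test}\conc\halt\conc\mathit{test}\conc\outbr.\setbr{\True}\conc\halt$ at positions $3$--$7$, and deleting $u_4$ does not obviously yield a program computing $\ol{\PF{2}}$ once jumps are present). The paper instead fixes the tail $u_6\conc u_7=\outbr.\setbr{\True}\conc\halt$, normalizes $u_1$ to be a test (an assumption you omit), and then does a position-by-position case split on $u_2,\ldots,u_5$; most branches are killed by exhibiting a concrete input on which $X$ is wrong, and only a minority by shortening to length~$\leq 6$. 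Your plan captures the skeleton but underestimates how much of the argument is this direct semantic case-checking rather than reduction.
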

\begin{proof}
Let $X \in \ISbrna$ be such that $X$ computes $\PF{2}$.
Then, by Lemma~\ref{lemma-complexity-class-0-neg-b}, $\psize(X) \geq 7$.
It remains to be proved that $\psize(X) \neq 7$.
Below this is proved by contradiction, using the following assumptions 
with respect to $X$:
\begin{itemize}
\item 
the first instruction of $X$ is a test;
\item
the instructions $\fjmp{0}$ and $\fjmp{1}$ do not occur in $X$;
\item
there does not occur an instruction of the form $\fjmp{l}$ in $X$ whose 
effect is that the last occurrence of $\outbr.\setbr{\True}$ in $X$ is 
skipped.
\end{itemize}
These assumptions can be made without loss of generality because, for 
each of them, if it is not met, $X$ can be replaced by an instruction 
sequence from $\ISbrna$ of the same or smaller length by which it is 
met.

Assume that $\psize(X) = 7$, and 
suppose that $X = u_1 \conc \ldots \conc u_7$.
From observations~(1), (2), and~(3) from the proof of 
Lemma~\ref{lemma-complexity-class-0-neg-a}, it follows that either 
$u_5$ must be $\outbr.\setbr{\True}$ and $u_6$ must be $\halt$ or
$u_6$ must be $\outbr.\setbr{\True}$ and $u_7$ must be $\halt$.
However, in the former case $X$ can be replaced by a shorter instruction 
sequence from $\ISbrna$.
Because this contradicts Lemma~\ref{lemma-complexity-class-0-neg-b}, 
it is impossible that $u_5 \equiv \outbr.\setbr{\True}$ and 
$u_6 \equiv \halt$.
So 
$X = u_1 \conc \ldots \conc u_5 \conc \outbr.\setbr{\True} \conc \halt$.

Consider the case that $u_1 \equiv \ptst{\inbr{1}.\getbr}$.
Because the first occurrence of $\halt$ must be preceded by at least one 
test on $\inbr{1}$ and one test on $\inbr{2}$, $u_2$ must be either 
$\fjmp{2}$, $\fjmp{3}$, $\fjmp{4}$ or a test.
If $u_2 \equiv \fjmp{2}$, then $X$ can be replaced by an instruction 
sequence whose length is $6$, to wit
$\ntst{\inbr{1}.\getbr} \conc u_3 \conc u_4 \conc u_5 \conc
 \outbr.\setbr{\True} \conc \halt$.
Because this contradicts Lemma~\ref{lemma-complexity-class-0-neg-b},
it is impossible that $u_2 \equiv \fjmp{2}$.
If $u_2 \equiv \fjmp{4}$ and moreover $\inbr{1}$ contains $\True$, then 
the final content of $\outbr$ will not be dependent on the content of 
$\inbr{2}$.
Therefore, it is also impossible that $u_2 \equiv \fjmp{4}$.
The cases that $u_2 \equiv \fjmp{3}$ and $u_2$ is a test need more
extensive investigation.

Because the first occurrence of $\halt$ must be preceded by at least one 
test on $\inbr{1}$ and one test on $\inbr{1}$, $u_3$ must be either 
$\fjmp{2}$, $\fjmp{3}$ or a test if $u_2 \equiv \fjmp{3}$.
If $u_2 \equiv \fjmp{3}$ and $u_3 \equiv \fjmp{2}$, then $X$ can be 
replaced by an instruction sequence whose length is $4$, to wit
$u_4 \conc u_5 \conc \outbr.\setbr{\True} \conc \halt$.
Because this contradicts Lemma~\ref{lemma-complexity-class-0-neg-b},
it is impossible that $u_3 \equiv \fjmp{2}$ if $u_2 \equiv \fjmp{3}$.
If $u_2 \equiv \fjmp{3}$ and $u_3 \equiv \fjmp{3}$ and moreover 
$\inbr{1}$ contains $\False$, then the final content of $\outbr$ will 
not be dependent on the content of $\inbr{2}$.
Therefore, it is impossible that $u_3 \equiv \fjmp{3}$ if 
$u_2 \equiv \fjmp{3}$.
So $u_3$ must be a test if $u_2 \equiv \fjmp{3}$.
Because $\outbr.\setbr{\True}$ has to be executed if $\inbr{1}$ contains 
$\True$ and $\inbr{2}$ contains $\False$, 
$u_5 \equiv \ntst{\inbr{2}.\getbr}$ if $u_2 \equiv \fjmp{3}$.
Moreover, because there must be at least two occurrences of $\halt$ in 
$X$ and $u_3$ must be a test if $u_2 \equiv \fjmp{3}$, 
$u_4$ must be $\halt$ if $u_2 \equiv \fjmp{3}$.
So 
$X = \ptst{\inbr{1}.\getbr} \conc \fjmp{3} \conc u_3 \conc \halt \conc
 \ntst{\inbr{2}.\getbr} \conc \outbr.\setbr{\True} \conc \halt$
and $u_3$ must be a test if $u_2 \equiv \fjmp{3}$.
In the case that $u_2 \equiv \fjmp{3}$, the subcase that $u_3$ is a test 
needs more extensive investigation.

If $u_3$ is a test, it is either $\ptst{\inbr{1}.\getbr}$, 
$\ntst{\inbr{1}.\getbr}$, $\ptst{\inbr{2}.\getbr}$ or 
$\ntst{\inbr{2}.\getbr}$.
If $u_2 \equiv \fjmp{3}$ and $u_3 \equiv \ptst{\inbr{1}.\getbr}$, then 
the final content of $\outbr$ will be independent of the content of 
$\inbr{1}$.
If $u_2 \equiv \fjmp{3}$ and $u_3 \equiv \ntst{\inbr{1}.\getbr}$, then 
the final content of $\outbr$ will be independent of the content of 
$\inbr{2}$ if $\inbr{1}$ contains $\False$.
If $u_2 \equiv \fjmp{3}$ and either $u_3 \equiv \ptst{\inbr{2}.\getbr}$ 
or $u_3 \equiv \ntst{\inbr{2}.\getbr}$, then the final content of 
$\outbr$ will be wrong if $\inbr{1}$ contains $\False$ and $\inbr{2}$ 
contains $\True$.
Therefore, it is impossible that $u_3$ is a test if 
$u_2 \equiv \fjmp{3}$.
Because there are no more alternatives left for $u_3$ if 
$u_2 \equiv \fjmp{3}$, it is impossible that $u_2 \equiv \fjmp{3}$. 
The left-over case for $u_2$ is the case that $u_2$ is a test.
This case needs very extensive investigation.

If $u_2$ is a test, it is either $\ptst{\inbr{1}.\getbr}$, 
$\ntst{\inbr{1}.\getbr}$, $\ptst{\inbr{2}.\getbr}$ or 
$\ntst{\inbr{2}.\getbr}$.
If $u_2 \equiv \ptst{\inbr{1}.\getbr}$, then $X$ can be replaced by an
instruction sequence whose length is $5$, to wit 
$u_3 \conc u_4 \conc u_5 \conc \outbr.\setbr{\True} \conc \halt$.
Because this contradicts Lemma~\ref{lemma-complexity-class-0-neg-b},
it is impossible that $u_2 \equiv \ptst{\inbr{1}.\getbr}$.
If $u_2 \equiv \ntst{\inbr{1}.\getbr}$, then $X$ can be replaced by an
instruction sequence whose length is $6$, to wit 
$\ntst{\inbr{1}.\getbr} \conc u_3 \conc u_4 \conc u_5 \conc
 \outbr.\setbr{\True} \conc \halt$.
Because this contradicts Lemma~\ref{lemma-complexity-class-0-neg-b},
it is impossible that $u_2 \equiv \ntst{\inbr{1}.\getbr}$.
The specific cases that $u_2 \equiv \ptst{\inbr{2}.\getbr}$ and 
$u_2 \equiv \ntst{\inbr{2}.\getbr}$ need more extensive investigation.

The following will be used in both these cases.
If both $u_2$ and $u_3$ are tests, then either $u_4$ or $u_5$ must be 
$\halt$ by observation~(3) from the proof of
Lemma~\ref{lemma-complexity-class-0-neg-a}.
If $u_5 \equiv \halt$, then 
(a)~it is impossible that $u_4 \equiv \halt$, because otherwise the 
final content of $\outbr$ will be independent of the content of 
$\inbr{2}$ if $\inbr{1}$ contains $\False$, and
(b)~it is impossible that $u_4 \equiv \fjmp{2}$ or $u_4$ is a test, 
because otherwise $X$ can be replaced by an instruction sequence whose 
length is $6$ and this contradicts 
Lemma~\ref{lemma-complexity-class-0-neg-b}.
So it is impossible that $u_5 \equiv \halt$ if both $u_2$ and $u_3$ are 
tests and $u_4$ must be $\halt$ if both $u_2$ and $u_3$ are tests.
Moreover, if $u_2$ is a test, then it is impossible that 
$u_3 \equiv \halt$, because otherwise the final content of $\outbr$ will 
be independent of the content of $\inbr{2}$ if $\inbr{1}$ contains 
$\False$.

Because it is impossible that $u_3 \equiv \halt$ if $u_2$ is a test, 
$u_3$ must be either $\fjmp{2}$, $\fjmp{3}$ or a test if 
$u_2 \equiv \ptst{\inbr{2}.\getbr}$.
If $u_2 \equiv \ptst{\inbr{2}.\getbr}$ and $u_3 \equiv \fjmp{2}$, then 
the final content of $\outbr$ will be wrong if $\inbr{1}$ contains 
$\False$ unless $u_5 \equiv \ptst{\inbr{2}.\getbr}$.
However, then $u_4$ must be $\halt$ by observation~(3) from the proof of
Lemma~\ref{lemma-complexity-class-0-neg-a} and the final content of 
$\outbr$ will be wrong if $\inbr{1}$ contains $\True$ and $\inbr{2}$ 
contains $\True$.
Therefore, it is impossible that $u_3 \equiv \fjmp{2}$ if 
$u_2 \equiv \ptst{\inbr{2}.\getbr}$. 
If $u_2 \equiv \ptst{\inbr{2}.\getbr}$ and $u_3 \equiv \fjmp{3}$, then 
the final content of $\outbr$ will be independent of the content of
$\inbr{2}$ if $\inbr{1}$ contains $\False$.
Therefore, it is impossible that $u_3 \equiv \fjmp{3}$ if 
$u_2 \equiv \ptst{\inbr{2}.\getbr}$. 
So $u_3$ must be a test if $u_2 \equiv \ptst{\inbr{2}.\getbr}$.
We know that $u_4$ must be $\halt$ and it is impossible that 
$u_5 \equiv \halt$ if both $u_2$ and $u_3$ are tests. 
This implies that, if $u_2 \equiv \ptst{\inbr{2}.\getbr}$ and $u_3$ is 
a test, the final content of $\outbr$ will be wrong if $\inbr{1}$ 
contains $\True$ and $\inbr{2}$ contains $\False$.
Therefore, it is impossible that $u_3$ is a test if 
$u_2 \equiv \ptst{\inbr{2}.\getbr}$. 
Because there are no more alternatives left for $u_3$ if 
$u_2 \equiv \ptst{\inbr{2}.\getbr}$, it is impossible that 
$u_2 \equiv \ptst{\inbr{2}.\getbr}$.

Because it is impossible that $u_3 \equiv \halt$ if $u_2$ is a test, 
$u_3$ must be either $\fjmp{2}$, $\fjmp{3}$ or a test if 
$u_2 \equiv \ntst{\inbr{2}.\getbr}$.
If $u_2 \equiv \ntst{\inbr{2}.\getbr}$ and $u_3 \equiv \fjmp{2}$, then 
the final content of $\outbr$ will be wrong if $\inbr{1}$ contains 
$\True$ and $\inbr{2}$ contains $\False$ unless 
$u_5 \equiv \ntst{\inbr{2}.\getbr}$.
However, then $u_4$ must be $\halt$ by observation~(3) from the proof 
of Lemma~\ref{lemma-complexity-class-0-neg-a} and the final content of 
$\outbr$ will be wrong if $\inbr{1}$ contains $\False$ and $\inbr{2}$ 
contains $\True$.
Therefore, it is impossible that $u_3 \equiv \fjmp{2}$ if 
$u_2 \equiv \ntst{\inbr{2}.\getbr}$. 
If $u_2 \equiv \ntst{\inbr{2}.\getbr}$ and $u_3 \equiv \fjmp{3}$, then 
the final content of $\outbr$ will be independent of the content of 
$\inbr{2}$ if $\inbr{1}$ contains $\False$.
Therefore, it is impossible that $u_3 \equiv \fjmp{3}$ if 
$u_2 \equiv \ntst{\inbr{2}.\getbr}$. 
So $u_3$ must be a test if $u_2 \equiv \ntst{\inbr{2}.\getbr}$.
In the case that $u_2 \equiv \ntst{\inbr{2}.\getbr}$, the subcase that 
$u_3$ is a test needs more extensive investigation.

If $u_3$ is a test, it is either $\ptst{\inbr{1}.\getbr}$, 
$\ntst{\inbr{1}.\getbr}$, $\ptst{\inbr{2}.\getbr}$ or 
$\ntst{\inbr{2}.\getbr}$.
We know that $u_4$ must be $\halt$ and it is impossible that 
$u_5 \equiv \halt$ if both $u_2$ and $u_3$ are tests. 
This implies that 
(a)~if $u_2 \equiv \ntst{\inbr{2}.\getbr}$ and either
$u_3 \equiv \ptst{\inbr{1}.\getbr}$ or
$u_3 \equiv \ntst{\inbr{2}.\getbr}$, then the final content of $\outbr$ 
will be wrong if $\inbr{1}$ contains $\True$ and $\inbr{2}$ contains 
$\False$ and
(b)~if $u_2 \equiv \ntst{\inbr{2}.\getbr}$ and either 
$u_3 \equiv \ntst{\inbr{1}.\getbr}$ or 
$u_3 \equiv \ptst{\inbr{2}.\getbr}$, then the final content of $\outbr$ 
will be wrong if $\inbr{1}$ contains $\False$ and $\inbr{2}$ contains 
$\True$.
Because there are no more alternatives left for $u_3$ if 
$u_2 \equiv \ntst{\inbr{2}.\getbr}$, it is impossible that 
$u_2 \equiv \ntst{\inbr{2}.\getbr}$.

Because there are no more alternatives left for $u_2$, it is impossible 
that $u_1 \equiv \ptst{\inbr{1}.\getbr}$.
Analogously, we find that 
it is impossible that $u_1 \equiv \ntst{\inbr{1}.\getbr}$,
it is impossible that $u_1 \equiv \ptst{\inbr{2}.\getbr}$, and
it is impossible that $u_1 \equiv \ntst{\inbr{2}.\getbr}$.
Hence, it is impossible that $\psize(X) = 7$.
\qed
\end{proof}

Lemma~\ref{lemma-complexity-class-0-neg-b} is used in the proof of 
Lemma~\ref{lemma-complexity-class-0-neg-c}.
The latter lemma, in its turn, is used below in the proof of 
the main result of this section.
\begin{theorem}
\label{theorem-complexity-class-0-neg}
$\PFF \notin \nuc{\ISbrna}{B(2 \mul n + 3)}$.
\end{theorem}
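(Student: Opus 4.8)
The goal is to show $\PFF \notin \nuc{\ISbrna}{B(2 \mul n + 3)}$. By the definition of $\nuc{\IS}{B(f(n))}$, this means there is no $m \in \Nat$ such that for all $n \geq m$ there exists an $X \in \ISbrna$ computing $\PF{n}$ with $\psize(X) \leq 2 \mul n + 3$. The plan is to find a single value of $n$ for which \emph{no} $X \in \ISbrna$ computing $\PF{n}$ satisfies $\psize(X) \leq 2 \mul n + 3$, and in fact to do this for $n = 2$: I would show that any $X \in \ISbrna$ computing $\PF{2}$ has $\psize(X) > 7 = 2 \mul 2 + 3$. But that is exactly the content of Lemma~\ref{lemma-complexity-class-0-neg-c}, so the theorem follows almost immediately. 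Concretely, suppose toward a contradiction that $\PFF \in \nuc{\ISbrna}{B(2 \mul n + 3)}$; then there is an $m$ such that for every $n \geq m$ some $X \in \ISbrna$ computes $\PF{n}$ with $\psize(X) \leq 2 \mul n + 3$.

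To turn the $n = 2$ obstruction into a contradiction for arbitrary large $n$, I would use a padding/restriction argument: from an instruction sequence computing $\PF{n}$ with few instructions I want to extract one computing $\PF{2}$ with at most $7$ instructions, contradicting Lemma~\ref{lemma-complexity-class-0-neg-c}. The natural move is to take $n \geq m$ with $n$ even (or to handle parity of $n$ by the complement trick of Proposition~\ref{proposition-complement}), fix the contents of the input registers $\inbr{3},\ldots,\inbr{n}$ to $\False$, and observe that $\PF{n}$ restricted this way is $\PF{2}$ on $\inbr{1},\inbr{2}$. An instruction sequence in $\ISbrna$ computing $\PF{n}$ then yields, after deleting or short-circuiting the tests on $\inbr{3},\ldots,\inbr{n}$ (replacing each test on a register known to contain $\False$ by the appropriate jump, then simplifying), an instruction sequence in $\ISbrna$ computing $\PF{2}$ that is no longer than the original. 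The difficulty is that the bound $2 \mul n + 3$ still grows with $n$, so a crude restriction does not by itself give an instruction sequence of length $\leq 7$; one needs that the restriction actually removes roughly $2$ instructions per input register past the second.

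Rather than fight that, the cleaner route—and I expect this is the intended one—is a direct counting argument not relying on $n=2$ at all: show that computing $\PF{n}$ in $\ISbrna$ requires, for each $i \in \set{1,\ldots,n}$, enough test instructions on $\inbr{i}$, enough occurrences of $\outbr.\setbr{\True}$, and enough occurrences of $\halt$, in the spirit of observations~(1)--(3) in the proof of Lemma~\ref{lemma-complexity-class-0-neg-a}, pushing the per-register lower bound from "at least one test, with one register needing two" up to an amortized count strictly above $2$ per input register. Lemmas~\ref{lemma-complexity-class-0-neg-a}--\ref{lemma-complexity-class-0-neg-c} establish the base case $\psize(X) \geq 8$ for $n = 2$; an induction on $n$ analogous to the reasoning already used, peeling off the treatment of $\inbr{n}$ and noting it costs more than $2$ instructions on average, would give $\psize(X) \geq 2 \mul n + 4 > 2 \mul n + 3$ for all $n \geq 2$, hence the theorem. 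The main obstacle is making the inductive step precise: one must argue that the segment of $X$ responsible for the last input variable genuinely cannot be compressed below the threshold, which is where the case analysis of Lemma~\ref{lemma-complexity-class-0-neg-c} (and its normalizing assumptions about $\fjmp{0}$, $\fjmp{1}$, and jumps over the final $\outbr.\setbr{\True}$) does the heavy lifting; everything after that is bookkeeping.
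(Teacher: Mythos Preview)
Your plan has the right skeleton --- induction on $n$ with Lemma~\ref{lemma-complexity-class-0-neg-c} as the base case --- but the inductive step is left as an ``obstacle'' rather than solved, and the mechanism you gesture at (``more than $2$ instructions on average'') is not what makes it work. Note also that a single failing $n$ does \emph{not} suffice, since $B(2n+3)$ only constrains $h$ eventually; you need the lower bound for all sufficiently large $n$. You realize this and pivot to induction, which is the correct move.

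The genuine gap is that you have not identified the reduction mechanism for the inductive step, and you misattribute the difficulty. The step is short and does \emph{not} reuse the case analysis of Lemma~\ref{lemma-complexity-class-0-neg-c}: take a minimal-length $X \in \ISbrna$ computing $\PF{n+1}$; by minimality its first instruction $u_1$ is a test, and by the symmetry of $\PF{n+1}$ in its arguments one may assume $u_1$ tests $\inbr{n{+}1}$. Now fix the content of $\inbr{n{+}1}$ to the value that makes $u_1$ \emph{skip} the next instruction, so execution effectively begins at $u_3$. In $u_3 \conc \ldots \conc u_k$ replace any remaining tests on $\inbr{n{+}1}$ by the corresponding jumps (dropping resulting $\fjmp{1}$'s); the resulting $Y$ satisfies $\psize(Y) \leq \psize(X)-2$ and computes either $\PF{n}$ or $\ol{\PF{n}}$, the latter being converted at no cost by Proposition~\ref{proposition-complement}. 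Hence $\psize(X) \leq 2(n{+}1)+3$ would yield a computation of $\PF{n}$ of length $\leq 2n+3$, contradicting the induction hypothesis. The heavy case analysis lives entirely in the base case $n=2$; your proposal inverts this, treating the base as settled and the step as the hard part. Your first route (fixing $\inbr{3},\ldots,\inbr{n}$ simultaneously to reach $\PF{2}$) is essentially this same step iterated $n-2$ times, but without the ``first instruction tests the variable being eliminated, and the skip branch discards two instructions at once'' observation you cannot guarantee two instructions are saved per eliminated variable --- which is exactly why you could not make it go through.
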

\begin{proof}
It is sufficient to prove that there exists an $m \in \Natpos$ such 
that, for all $n \geq m$, for all instruction sequences $X \in \ISbrna$ 
that compute $\PF{n}$, $\psize(X) > 2 \mul n + 3$.
We take $m = 2$ because of Lemma~\ref{lemma-complexity-class-0-neg-c} 
and prove the property to be proved for all $n \geq 2$ by induction on 
$n$.
The basis step consists of proving that, for all instruction sequences 
$X \in \ISbrna$ that compute $\PF{2}$, $\psize(X) > 7$. 
This follows trivially from Lemma~\ref{lemma-complexity-class-0-neg-c}. 
The inductive step is proved below by contradiction.

Suppose that $X \in \ISbrna$, $X$ computes $\PF{n+1}$, and
$\psize(X) \leq 2 \mul (n + 1) + 3$.
Without loss of generality, we may assume that there does not exist an 
$X' \in \ISbrna$ that computes $\PF{n+1}$ such that 
$\psize(X') < \psize(X)$.
Therefore, we may assume that $X = u_1 \conc \ldots \conc u_k$ 
($ k \leq 2 \mul (n + 1) + 3$) where 
$u_1 \equiv \ptst{\inbr{n{+}1}.\getbr}$ or
$u_1 \equiv \ntst{\inbr{n{+}1}.\getbr}$.
We distinguish these two cases.

In the case that $u_1 \equiv \ntst{\inbr{n{+}1}.\getbr}$, we consider 
the case that $\inbr{n{+}1}$ contains $\True$.
In this case, after execution of $u_1$, execution proceeds with $u_3$.
Let $Y \in \ISbrna$ be obtained from $u_3 \conc \ldots \conc u_k$ by 
first replacing $\ptst{\inbr{n{+}1}.\getbr}$ by $\fjmp{1}$ and
$\ntst{\inbr{n{+}1}.\getbr}$ by $\fjmp{2}$ and then removing the 
$\fjmp{1}$'s.
Then, we have that $Y$ computes $\ol{\PF{n}}$ and
$\psize(Y) \leq \psize(X) - 2 \leq 2 \mul n + 3$.
Hence, by Proposition~\ref{proposition-complement}, there exists a 
$Z \in \ISbrna$ that computes $\PF{n}$ such that 
$\psize(Z) \leq 2 \mul n + 3$.
This contradicts the induction hypothesis.

In the case that $u_1 \equiv \ptst{\inbr{n{+}1}.\getbr}$, we consider 
the case that $\inbr{n{+}1}$ contains $\False$.
This case leads to a contradiction in the same way as above, but  
without the use of Proposition~\ref{proposition-complement}.
\qed
\end{proof}

The following is a corollary of 
Theorems~\ref{theorem-complexity-class-1-pos}
and~\ref{theorem-complexity-class-0-neg}.
\begin{corollary}
$\nuc{\ISbrna}{B(2 \mul n + 3)} \subset
 \nuc{\ARISbr{1}}{B(2 \mul n + 3)}$.%
\footnote
{As usual, the symbol $\subset$ is used to denote the proper inclusion
 relation.}
\end{corollary}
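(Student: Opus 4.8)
The plan is to derive the corollary directly from the two cited theorems together with the elementary observation that $\ISbrna \subseteq \ARISbr{1}$: the set inclusion gives the non-strict inclusion of the complexity classes, and the parity family $\PFF$ serves as the witness that the inclusion is strict.

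First I would establish $\nuc{\ISbrna}{B(2 \mul n + 3)} \subseteq \nuc{\ARISbr{1}}{B(2 \mul n + 3)}$. An instruction sequence in which no auxiliary registers are used contains, a fortiori, no primitive instruction of the forms $\auxbr{i}.c$, $\ptst{\auxbr{i}.c}$ or $\ntst{\auxbr{i}.c}$ with $i > 1$, so $\ISbrna \subseteq \ARISbr{1}$. Then I would note that $\nuc{\IS}{\FN}$ is monotone in its first argument: if $\IS \subseteq \IS'$ and $\indfam{f_n}{n \in \Nat} \in \nuc{\IS}{\FN}$ as witnessed by some $h \in \FN$, then for each $n$ the instruction sequence $X \in \IS$ that computes $f_n$ with $\psize(X) \leq h(n)$ also lies in $\IS'$, so the same $h$ witnesses $\indfam{f_n}{n \in \Nat} \in \nuc{\IS'}{\FN}$. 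Applying this with $\IS = \ISbrna$, $\IS' = \ARISbr{1}$ and $\FN$ the relevant class of functions eventually bounded by $2 \mul n + 3$ yields the inclusion.

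Next I would show the inclusion is proper by exhibiting a Boolean function family in the right-hand class but not the left-hand one. By Theorem~\ref{theorem-complexity-class-1-pos} we have $\PFF \in \nuc{\ARISbr{1}}{B(2 \mul n + 3)}$, while by Theorem~\ref{theorem-complexity-class-0-neg} we have $\PFF \notin \nuc{\ISbrna}{B(2 \mul n + 3)}$. Hence $\PFF \in \nuc{\ARISbr{1}}{B(2 \mul n + 3)} \setminus \nuc{\ISbrna}{B(2 \mul n + 3)}$, which together with the inclusion just proved gives $\nuc{\ISbrna}{B(2 \mul n + 3)} \subset \nuc{\ARISbr{1}}{B(2 \mul n + 3)}$.

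There is no real obstacle in the corollary itself: all the difficulty has been absorbed into Theorem~\ref{theorem-complexity-class-0-neg} and, behind it, the extensive case analysis of Lemmas~\ref{lemma-complexity-class-0-neg-a}--\ref{lemma-complexity-class-0-neg-c}. The only point I would state explicitly, rather than leave implicit, is the monotonicity of $\nuc{\cdot}{\cdot}$ in its first argument, so that the containment $\ISbrna \subseteq \ARISbr{1}$ can be transported cleanly to the complexity classes; everything else is immediate from the definitions.
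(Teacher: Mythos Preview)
Your proposal is correct and is exactly the argument the paper intends: the corollary is stated without proof as an immediate consequence of Theorems~\ref{theorem-complexity-class-1-pos} and~\ref{theorem-complexity-class-0-neg}, and you have simply made explicit the obvious inclusion $\ISbrna \subseteq \ARISbr{1}$ and the monotonicity of $\nuc{\cdot}{\cdot}$ in its first argument that underlie this.
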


\section{Concluding Remarks}
\label{sect-concl}

We have shown, in a setting where programs are instruction sequences 
acting on Boolean registers, that, in the case of the parity functions,
smaller programs are possible with the use of one auxiliary Boolean 
register than without the use of an auxiliary Boolean register.
This result supports a basic intuition behind the storage of 
auxiliary data, namely the intuition that this makes possible a 
reduction of the size of a program.

Of course, more results supporting this intuition would be nice. 
Adversely, we do not even know at the present time how to prove, for 
example, that there exists a Boolean function family for which smaller 
instruction sequences are possible with the use of two auxiliary Boolean 
registers than with the use of one auxiliary Boolean register.
Moreover, we do not know of results in other theoretical settings that 
support the intuition that the storage of auxiliary data makes possible 
a reduction of the size of a program.

It is still an open question whether smaller programs are possible with 
the use of one auxiliary Boolean register than without the use of an 
auxiliary Boolean register in the absence of instructions to complement
the content of auxiliary Boolean registers. 
We conjecture that this question can be answered in the affirmative, but
the practical problem is that the number of different instruction 
sequences to be considered in the proof is increased by a factor of 
$10^4$.

It is intuitively clear that the instructions sequences $\PIS{0}{n}$ and 
$\PIS{1}{n}$ from Section~\ref{sect-computing-pf} express parameterized 
algorithms.
However, it is very difficult to develop a precise viewpoint on what is 
a parameterized algorithm.
In~\cite{BM14a}, we looked for an equivalence relation on instruction 
sequences that captures to a reasonable degree the intuitive notion that 
two instruction sequences express the same algorithm.
In that paper, we considered non-parameterized algorithms only because 
it turned out to be already very difficult to develop a precise 
viewpoint on what is a non-parameterized algorithm.

\subsection*{Acknowledgements}
We thank two anonymous referees for carefully reading a preliminary 
version of this paper and for suggesting improvements of the 
presentation of the paper.

\bibliographystyle{splncs03}
\bibliography{IS}

\begin{thebibliography}{10}

\bibitem{BM13a}
Bergstra JA, Middelburg CA.
\newblock Instruction Sequence Based Non-uniform Complexity Classes.
\newblock Scientific Annals of Computer Science. 2014;24(1):47--89.
\newblock doi:10.7561/\linebreak[2]SACS.2014.1.47.

\bibitem{Sal05a}
Salomon D.
\newblock Coding for Data and Computer Communications.
\newblock Berlin: Springer-Verlag; 2005.
\newblock doi:10.1007/b102531.

\bibitem{FSS84a}
Furst M, Saxe JB, Sipser M.
\newblock Parity, Circuits, and the Polynomial-Time Hierarchy.
\newblock Mathematical Systems Theory. 1984;17(1):13--27.
\newblock doi:10.1007/BF01744431.

\bibitem{Has86a}
H{\aa}stad J.
\newblock Almost Optimal Lower Bounds for Small Depth Circuits.
\newblock In: STOC '86. ACM Press; 1986. p. 6--20.
\newblock doi:10.1145/12130.12132.

\bibitem{Weg91a}
Wegener I.
\newblock The Complexity of the Parity Function in Unbounded Fan-In, Unbounded
  Depth Circuits.
\newblock Theoretical Computer Science. 1991;85(1):155--170.
\newblock doi:10.1016/0304-3975(91)90052-4.

\bibitem{BIS12a}
Beame P, Impagliazzo R, Srinivasan S.
\newblock Approximating {AC$^0$} by Small Height Decision Trees and a
  Deterministic Algorithm for \#{AC$^0$SAT}.
\newblock In: CCC 2012. IEEE Computer Society Press; 2012. p. 117--125.
\newblock doi:10.1109/CCC.2012.40.

\bibitem{IMP12a}
Impagliazzo R, Matthews W, Paturi R.
\newblock A Satisfiability Algorithm for {AC$^0$}.
\newblock In: SODA 2012. SIAM; 2012. p. 961--972.
\newblock doi:10.1137/1.9781611973099.77.

\bibitem{Has14a}
H{\aa}stad J.
\newblock On the Correlation of Parity and Small-Depth Circuits.
\newblock SIAM Journal of Computing. 2014;43(5):1699--1708.
\newblock doi:10.1137/120897432.

\bibitem{BM15a}
Bergstra JA, Middelburg CA. On Instruction Sets for {Boolean} Registers in
  Program Algebra.
\newblock CoRR; 2015.
\newblock arXiv:1502.00238 [cs.PL].
\newblock Accepted for publication by Scientific Annals of Computer Science.

\bibitem{BJ66a}
B{\"{o}}hm C, Jacopini G.
\newblock Flow Diagrams, {Turing} Machines and Languages with Only Two
  Formation Rules.
\newblock Communications of the ACM. 1966;9(5):366--371.
\newblock doi:10.1145/355592.365646.

\bibitem{Coo67a}
Cooper DC.
\newblock B{\"{o}}hm and {Jacopini's} Reduction of Flow Charts.
\newblock Communications of the ACM. 1967;10(8):463, 473.
\newblock doi:10.1145/363534.363539.

\bibitem{OG76a}
Owicki S, Gries D.
\newblock An Axiomatic Proof Technique for Parallel Programs {I}.
\newblock Acta Informatica. 1976;6(4):319--340.
\newblock doi:10.1007/BF00268134.

\bibitem{ABO09a}
Apt KR, de~Boer FS, Olderog ER.
\newblock Verification of Sequential and Concurrent Programs.
\newblock 3rd ed. Texts in Computer Science. Berlin: Springer-Verlag; 2009.
\newblock doi:10.1007/978-1-84882-745-5.

\bibitem{BL02a}
Bergstra JA, Loots ME.
\newblock Program Algebra for Sequential Code.
\newblock Journal of Logic and Algebraic Programming. 2002;51(2):125--156.
\newblock doi:10.1016/S1567-8326(02)00018-8.

\bibitem{BM12b}
Bergstra JA, Middelburg CA.
\newblock Instruction Sequences for Computer Science. vol.~2 of Atlantis
  Studies in Computing.
\newblock Amsterdam: Atlantis Press; 2012.
\newblock doi:10.2991/978-94-91216-65-7.

\bibitem{BM14a}
Bergstra JA, Middelburg CA.
\newblock On Algorithmic Equivalence of Instruction Sequences for Computing Bit
  String Functions.
\newblock Fundamenta Informaticae. 2015;138(4):411--434.
\newblock doi:10.3233/FI-2015-1219.

\end{thebibliography}

\end{document}